\newcommand{\stretchedhat}[1]{%
\savestack{\tmpbox}{\stretchto{%
  \scaleto{%
    \scalerel*[\widthof{\ensuremath{#1}}]{\kern.1pt\mathchar"0362\kern.1pt}%
    {\rule{0ex}{\textheight}}
  }{\textheight}%
}{2.4ex}}%
\stackon[-6.9pt]{#1}{\tmpbox}%
}
\newcommand{\stretchedtilde}[1]{%
\savestack{\tmpbox}{\stretchto{%
  \scaleto{%
    \scalerel*[\widthof{\ensuremath{#1}}]{\kern.1pt\mathchar"307E\kern.1pt}%
    {\rule{0ex}{\textheight}}
  }{\textheight}%
}{2.4ex}}%
\stackon[-6.9pt]{#1}{\tmpbox}%
}
\newtheorem{theorem}{Theorem}
\newtheorem{proposition}{Proposition}
\newtheorem{lemma}{Lemma}
\newtheorem{corollary}{Corollary}
\newtheorem{property}{Property}
\newcommand{\imi}{\mathrm{i}}
\newcommand{\PTr}[2]{\text{Tr}_{#1}\left(#2\right)}
\newcommand{\sign}[1]{\mathsf{sign}(#1)}
\newcommand{\Hil}{{\mathcal H}}
\newcommand{\Th}{\text{th}}
\newcommand{\J}{{J\vphantom{\overline{J}}}}
\newcommand{\Jbar}{{\overline{J}}}
\newcommand{\slpo}{S_{\scalebox{0.5}{CHSH}}^{\scalebox{0.5}{LPO}}}
\newcommand{\clpo}[1]{C^{\scalebox{0.5}{LPO}}_{#1}}
\newcommand{\schsh}{S_{\scalebox{0.5}{CHSH}}}
\newcommand{\wit}[2][]{%
  \mathrm{W}%
  \if\relax\detokenize{#1}\relax\else^{#1}\fi%
  _{\scalebox{0.5}{\text{#2}}}(\rho)%
}
\definecolor{rkrPurple}{HTML}{73024F}
\begin{document}

\title{Local perception operators and classicality: new tools for old tests}
\author{Rohit Kishan Ray\orcidlink{0000-0002-5443-4782}}
\email{rkray@vt.edu}
\affiliation{Center for theoretical Physics of Complex Systems, Institute for Basic Science (PCS-IBS), Daejeon - 34126, South Korea}
\affiliation{Department of Materials Science and Engineering, Virginia Tech, Blacksburg, VA 24061, USA}

\date{\today}


\begin{abstract}
Quantum nonlocality is often judged by violations of Bell-type inequalities for a given state. The computation of such violations is a global task, requiring evaluation of global correlations and subsequent testing against a Bell functional. We ask instead: when is a given state local (classical)? We formalize this question via local perception operators (LPOs) that compress global observables into locally accessible statistics, and we derive two complementary witnesses---one implementable by a single party with classical side information, one intrinsically two-sided. These tools revisit familiar Bell scenarios from a new operational angle. 
We show how the witness leads to state-aware constraints that depend on local marginals and measurement geometry, with natural specializations to canonical scenarios. The resulting criteria are built from first moments and standard projective measurements and provide a way to certify compatibility with local hidden variable explanations for the LPO-processed data in regimes where conventional Bell violations may be inconclusive. 

\end{abstract}

\pacs{}
\maketitle
\section{\label{sec:intro}Introduction}
Identifying whether a given quantum state accommodates a local hidden variable (LHV) description is one of the fundamental quests of the studies of quantum foundation. Various tests exist in this regard, the famous of the lot the Bell-CHSH test, an inequality first introduced by \citet{bell_1964_einstein}, and then reformulated to be tested on a two qubit system by Clauser-Horne-Shimony-Holt (CHSH)~\cite{clauser_1969_proposed}. Any state that satisfies the bound set by this inequality has a LHV description, and cannot be described by one if the inequality is violated. This violation was experimentally first tested by \citet{aspect_1982_experimental} albeit with some `loopholes', for the famous loophole free experiment for the same, see~\citet{hensen_2015_LoopholefreeBell}. These pathbreaking works paved the way for exploration and characterization of the quantum-ness of states. This involved proposing more tests to characterize the said quantum-ness of general multipartite state~\cite{mermin_1990_extreme}, providing more than two measurement settings to test the hidden structure of these states~\cite{collins_2004_relevant}, generalization of Bell-type inequality~\cite{collins_2002_belltype} among a sea of other works (for an overall review, see~\cite{brunner_2014_bella, cavalcanti_2016_Quantumsteering}). \citet{popescu_1995_bells} introduced the concept of super-quantum correlations, that not only maintain no-signaling, but also produce violations more than that prescribed by the traditional Bell-type inequalities. While all of these tests serve a purpose of deciding whether or not a given state is quantum by noticing violations, we can't be so certain of their nature if the violations don't happen. As it was noted by \citet{collins_2004_relevant}, and \citet{tavakoli_2020_does}, we see that there exist states, which suggest that a given state has LHV description under traditional Bell-CHSH test, can be actually non-local, and LHV model cannot reproduce the full quantum statistics. This renders the identification of states that have LHV description ambiguous, and dependent on multiple tests. 

Another approach of identifying whether a composite state is entangled (more quantum) or separable is by utilizing the Peres-Horodecki criterion, which considers the non-negativity of the partially transposed density matrix to be necessary criteria for separability~\cite{peres_1996_separability,horodecki_1997_separability}. Various other such separability based criteria exist, and they allude to the difficulty of using these criteria for higher dimensions and increasing partitions of a given composite state. In this paper, we attempt to address this gap. We introduce a ``witness'' that assesses the degree of separability of a given composite state. Using this. we can identify a wide class of states that accommodate LHV description, and can be extended to multipartite states of arbitrary dimensions. The LPO witnesses (symmetric and asymmetric) introduced here do not detect entanglement \emph{per se}. Rather, they bound what can be inferred from local marginals about the value of a chosen Bell functional after LPO processing. Accordingly, the asymmetric witness supplies a sufficient condition for LHV‑compatibility of the LPO‑processed statistics, whereas the symmetric witness furnishes state‑dependent upper bounds without constituting a Bell inequality.

The paper is structured as follows---in section~\ref{sec:lpo}, we introduce the local perception operator and discuss its key properties. In section~\ref{sec:witness}, we introduce our witness and prove the important bounds. In section~\ref{sec:numerics}, we present our numerical results, and we conclude in section~\ref{sec:conclude}.

\section{\label{sec:lpo} Local perception operator}
Consider a bipartite system \(AB\) in the Hilbert space \(\Hil_{AB} = \Hil_A\otimes\Hil_B\). For every density matrix $\rho \in \Hil_{AB}$, we can define the reduced local operator of subsystem $\J$ as $\rho_{\J} := \PTr{\Jbar}{\rho}$, where $\Jbar$ denotes the complementary subsystem of $\J$. For a given operator $X\in \mathcal{L}(\Hil_{AB})$\footnote{$\mathcal{L}(\Hil_{AB})$ is the set of all bounded linear operators acting on the Hilbert space $\Hil_{AB}$. }, a `local perception operator' (LPO) can be defined as: 
\begin{equation}\label{eq:local_perception_operators}
	(X)^\J_\rho  = \Tr_\Jbar[(I_\J{\otimes}\rho_\Jbar)X] \,\in \mathcal{L}(\Hil_\J).
\end{equation}
This LPO was introduced by Beretta in Refs.~\cite{beretta_2009_nonlinear,*beretta_2010_maximum}. Although, primarily used to define local in the context of nonlinear thermodynamic modeling of composite quantum systems, by definition, LPO presents a `classical' partitioning of a  composite system. It has the following properties.

\begin{property}   
Linearity --- for a fixed $\rho$, 
    \begin{equation}
        \label{eq:linearity}
        (\alpha X + \beta Y)^{\J}_\rho = \alpha (X)^{\J}_\rho + \beta (Y)^{\J}_\rho, \tag{P.I.}
    \end{equation}
\end{property}

\begin{property}
$\rho \mapsto (X)^{\J}_\rho$ is an affine map in the marginal $\rho_\Jbar$, but nonlinear in $\rho$.
\end{property}

\begin{property}
Classically communicated information --- $(X)^{\J}_\rho$ contains the information about $X$ that subsystem $\J$ can infer after classically receiving $\rho_{\Jbar}$,
    \begin{equation}\label{eq:local_perception_operators_id}
	\Tr[\rho_{\J}(X)^{\J}_\rho] =\Tr[(\rho_{\J}{\otimes}\rho_{\Jbar})X] =\Tr[\rho_{\Jbar}(X)^{\Jbar}_\rho]\,. \tag{P.II.}
    \end{equation}
\end{property}

\begin{property}
Uniqueness --- 
    \begin{lemma}
        For a fixed global $X$, the LPO $(X)^{\J}_\rho$ given by Eq.~\eqref{eq:local_perception_operators} is unique local operator satisfying property 3 Eq.~\eqref{eq:local_perception_operators_id} for all bi-partitions of $\rho$.
    \end{lemma}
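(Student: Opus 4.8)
The plan is to split the argument into an existence check and a uniqueness argument, with the real work going into correctly reading what ``satisfying Property~3 for all bi-partitions'' demands. First I would record a preliminary observation: by linearity \eqref{eq:linearity} and Property~2, $(X)^\J_\rho$ depends on $\rho$ only through the marginal $\rho_\Jbar$, so replacing $\rho$ by any state with the same $\Jbar$-marginal leaves $(X)^\J_\rho$ unchanged; taking in particular the product state $\sigma_\J\otimes\rho_\Jbar$ with $\sigma_\J$ an arbitrary state on $\Hil_\J$ turns the identity \eqref{eq:local_perception_operators_id} into the \emph{family} of constraints $\Tr_\J[\sigma_\J\,(X)^\J_\rho]=\Tr[(\sigma_\J\otimes\rho_\Jbar)X]$ holding for every $\sigma_\J$ (and symmetrically on the $\Jbar$-side). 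This family — one probe equation per local test state — is the operative content of Property~3, and it is what the quantifier ``for all bi-partitions'' supplies.

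Existence is then immediate from \eqref{eq:local_perception_operators}: using the adjoint relation between partial trace and tensoring with the identity,
\begin{equation}
\Tr_\J\!\big[\sigma_\J\,(X)^\J_\rho\big]=\Tr_\J\!\big[\sigma_\J\,\Tr_\Jbar[(I_\J\otimes\rho_\Jbar)X]\big]=\Tr\!\big[(\sigma_\J\otimes I_\Jbar)(I_\J\otimes\rho_\Jbar)X\big]=\Tr\!\big[(\sigma_\J\otimes\rho_\Jbar)X\big],
\end{equation}
which is the first equality of \eqref{eq:local_perception_operators_id}; the second follows for free because $\Tr[(\rho_\J\otimes\rho_\Jbar)X]$ is a single number invariant under swapping the two tensor slots, hence also equals the analogous expression built from $(X)^\Jbar_\rho$.

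For uniqueness, let $Y_\J\in\mathcal{L}(\Hil_\J)$ be any local operator obeying the same constraints. Then $\Tr_\J[\sigma_\J\,(Y_\J-(X)^\J_\rho)]=0$ for every density matrix $\sigma_\J$; since density matrices span $\mathcal{L}(\Hil_\J)$, this gives $\Tr_\J[Z\,(Y_\J-(X)^\J_\rho)]=0$ for all $Z\in\mathcal{L}(\Hil_\J)$, and choosing $Z=(Y_\J-(X)^\J_\rho)^\dagger$ yields $\lVert Y_\J-(X)^\J_\rho\rVert_{\mathrm{HS}}^2=0$, i.e. $Y_\J=(X)^\J_\rho$. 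Repeating the argument with $\J$ and $\Jbar$ exchanged — and, in the multipartite generalization, for every splitting of the parties into two blocks — shows that the whole collection of LPOs is fixed simultaneously and consistently, with \eqref{eq:local_perception_operators_id} serving as the compatibility condition linking complementary cuts.

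The main obstacle is precisely the first paragraph: as it stands, Property~3 is a \emph{single} scalar equation in the $d_\J^2$-dimensional unknown $(X)^\J_\rho$ and cannot by itself single it out, so the proof hinges on justifying that requiring the identity across all bi-partitions (equivalently, for every admissible $\J$-side state) promotes it to the full set of probe constraints. Once that is pinned down, the remainder is the routine linear-algebra fact that a finite-dimensional operator is determined by its Hilbert--Schmidt inner products against a spanning family, and no further computation is needed.
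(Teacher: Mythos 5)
Your proof is correct and follows essentially the same route as the paper's: form the difference $\Delta = Y - (X)^{\J}_\rho$, test it in trace against the family of admissible local states supplied by the ``all bi-partitions'' quantifier, and conclude $\Delta=0$ from the fact that density matrices span $\mathcal{L}(\Hil_\J)$. You merely make explicit what the paper leaves implicit --- the existence check, the reading of the quantifier as a family of probe states $\sigma_\J$, and the Hilbert--Schmidt spanning step --- which is a faithful filling-in rather than a different argument.
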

    \begin{proof}
    Let's assume there is another operator $Y\in \mathcal{L}(\Hil_{\J})$ that satisfies property~\eqref{eq:local_perception_operators_id}, implying $\Tr(\rho_{\J}Y) = \Tr[(\rho_{\J}\otimes\rho_{\Jbar})X]$. If we define, $\Delta:= Y - (X)^{\J}_\rho$, we immediately have $\Tr(\rho_{\J}\Delta) =0$ for all bi-partitions $\rho_{\J},~ \rho_{\Jbar}$ of $\rho$. Which implies $\Delta=0$.
    \end{proof}
\end{property}

\begin{property}
No-signaling --- $(X)^{\J}_\rho$ is invariant under unitary local operations in the subsystem $\Jbar$~\cite{ray_2025_nosignalinga}. 
\end{property}
For any function $X(\rho)$ (linear or nonlinear), the quantity $(X(\rho))^A$ is independent of local unitary operations in $B$, such that $(X(\rho))^A = (X(\rho^\prime))^A$, for $\rho^\prime  = (I_A\otimes U_B)\rho(I_A\otimes U_B^\dagger)$ where $U_B$ is a random unitary acting on subsystem $B$.

\begin{property}
Preserving POVMs---
\begin{theorem}
    Let $\{ E_\alpha\}$ be a POVM over $\Hil_A\otimes\Hil_B$ with $\sum_\alpha E_\alpha = I_{AB}$. For any bipartite state $\rho$ with marginals $\rho_A$ and $\rho_B$, we have 
    \begin{equation}\label{eq:theorem_povm}
        \sum_\alpha (E_\alpha)_\rho^A = I_A, \qquad \sum_\alpha (E_\alpha)_\rho^B = I_B
    \end{equation}
    where $(E_\alpha)^A_\rho := \PTr{B}{(I_A\otimes\rho_B) E_\alpha}$ and similarly for $B$.
\end{theorem}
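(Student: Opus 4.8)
The plan is to reduce the claim to a single application of linearity together with the unit-trace normalization of the reduced states. First I would invoke Property~\eqref{eq:linearity} --- equivalently, linearity of the partial trace and the fixed product structure of $I_A\otimes\rho_B$ --- to pull the sum over outcomes through the LPO map:
\begin{equation*}
\sum_\alpha (E_\alpha)^A_\rho = \PTr{B}{(I_A\otimes\rho_B)\sum_\alpha E_\alpha} = \PTr{B}{(I_A\otimes\rho_B)\,I_{AB}},
\end{equation*}
and the analogous identity for the $B$-side expression. This step already does essentially all the work.

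Next I would evaluate the lone remaining LPO. Since the partial trace factorizes on product operators, $\PTr{B}{I_A\otimes\rho_B} = I_A\cdot\Tr[\rho_B]$, and because $\rho_B=\PTr{A}{\rho}$ is a density operator we have $\Tr[\rho_B]=1$; hence $\sum_\alpha (E_\alpha)^A_\rho = I_A$. Running the identical argument with $A\leftrightarrow B$ and using $\Tr[\rho_A]=1$ gives $\sum_\alpha (E_\alpha)^B_\rho = I_B$.

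I do not expect a genuine obstacle here: the statement is a direct corollary of linearity of Eq.~\eqref{eq:local_perception_operators} in $X$ together with normalization of the marginals. The only point meriting care is the case of a countably infinite outcome set, where interchanging the sum with the LPO is justified by noting that $X\mapsto (X)^A_\rho$ is bounded (hence continuous) in trace norm --- it is partial tracing against the fixed trace-class operator $I_A\otimes\rho_B$ --- so dominated convergence applies. It is also worth recording, as the operational content of the theorem, that each $(E_\alpha)^A_\rho$ is positive semidefinite: it equals $\PTr{B}{(I_A\otimes\sqrt{\rho_B})\,E_\alpha\,(I_A\otimes\sqrt{\rho_B})}$, the partial trace of a positive operator. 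Consequently $\{(E_\alpha)^A_\rho\}$ is itself a bona fide POVM on $\Hil_A$, and likewise $\{(E_\alpha)^B_\rho\}$ on $\Hil_B$, which is the sense in which the LPO ``preserves'' POVMs.
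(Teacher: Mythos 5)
Your proposal is correct and follows essentially the same route as the paper's proof: pull the sum through the partial trace by linearity and use $\PTr{B}{I_A\otimes\rho_B}=I_A\Tr[\rho_B]=I_A$. The added remarks on trace-norm continuity for infinite outcome sets and on positivity of $(E_\alpha)^\J_\rho$ (so the images form a genuine POVM) go slightly beyond what the paper records but are consistent with it.
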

\begin{proof}
Using linearity of partial trace we can write:
\begin{align}
    \begin{split}
        \sum_\alpha (E_\alpha)_\rho^A &= \sum_\alpha \PTr{B}{(I_A\otimes\rho_B) E_\alpha},\\
        &= \PTr{B}{(I_A\otimes\rho_B)\sum_\alpha E_\alpha},\\
        &= \PTr{B}{(I_A\otimes\rho_B)} \;= I_A
    \end{split}
\end{align}
And similarly for $(E_\alpha)_\rho^B$.
\end{proof}
\begin{corollary}
    For any POVM $\{E_\alpha\}$ and state $\rho$, the LPO based expectations (probabilities) $p_\alpha^{(A)}:= \Tr(\rho_A (E_\alpha)_\rho^A)$ satisfy $\sum_\alpha p_\alpha^{(A)} = 1$.
\end{corollary}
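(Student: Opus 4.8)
The plan is to reduce the statement to the POVM-preservation Theorem, Eq.~\eqref{eq:theorem_povm}, together with normalization of the reduced state. First I would use that a POVM has at most countably many outcomes with $\sum_\alpha E_\alpha$ convergent to $I_{AB}$, so that linearity (and continuity) of the trace permits exchanging the outcome sum with the trace:
\begin{equation*}
  \sum_\alpha p_\alpha^{(A)} \;=\; \sum_\alpha \Tr\!\big[\rho_A\,(E_\alpha)^A_\rho\big] \;=\; \Tr\!\Big[\rho_A \sum_\alpha (E_\alpha)^A_\rho\Big].
\end{equation*}
By Eq.~\eqref{eq:theorem_povm} the bracketed sum equals $I_A$, so the right-hand side collapses to $\Tr[\rho_A I_A] = \Tr\rho_A = 1$, the last equality because $\rho_A = \PTr{B}{\rho}$ is a density operator. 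That is the whole argument; I do not expect a genuine obstacle, since the substantive content was already discharged in the Theorem — the corollary is in essence the identity ``$\sum_\alpha (E_\alpha)^A_\rho = I_A$'' paired against a normalized marginal, and the only point needing a word of care is the interchange of the (possibly infinite) outcome sum with the trace, which is licit because $\rho$ is trace-class and the POVM sum converges appropriately.

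An alternative derivation, which bypasses the Theorem entirely, uses Property~\eqref{eq:local_perception_operators_id}: that identity gives $p_\alpha^{(A)} = \Tr[\rho_A (E_\alpha)^A_\rho] = \Tr[(\rho_A\otimes\rho_B)E_\alpha]$, whence $\sum_\alpha p_\alpha^{(A)} = \Tr\!\big[(\rho_A\otimes\rho_B)\sum_\alpha E_\alpha\big] = \Tr[(\rho_A\otimes\rho_B)I_{AB}] = (\Tr\rho_A)(\Tr\rho_B) = 1$. I would likely present the first route in the main text and mention this one as a remark, since it also makes the $B$-side statement manifest by symmetry.

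Finally, to justify the parenthetical word ``probabilities'' I would also record nonnegativity. The cleanest way is again Property~\eqref{eq:local_perception_operators_id}: $p_\alpha^{(A)} = \Tr[(\rho_A\otimes\rho_B)E_\alpha]\ge 0$ because $\rho_A\otimes\rho_B$ and $E_\alpha$ are both positive semidefinite. If one prefers an operator-level statement, one checks $(E_\alpha)^A_\rho = \PTr{B}{(I_A\otimes\rho_B^{1/2})E_\alpha(I_A\otimes\rho_B^{1/2})} \ge 0$ by writing the two forms of the partial trace in a product basis and using $\rho_B^{1/2}\rho_B^{1/2}=\rho_B$, then invoking that partial traces preserve positivity. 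Combined with $\sum_\alpha p_\alpha^{(A)} = 1$, this shows $\{p_\alpha^{(A)}\}$ is a bona fide probability distribution, and the identical reasoning applies on $B$.
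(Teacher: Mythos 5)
Your main argument is exactly the route the paper intends: the corollary is an immediate consequence of the theorem, since $\sum_\alpha p_\alpha^{(A)} = \Tr[\rho_A \sum_\alpha (E_\alpha)^A_\rho] = \Tr[\rho_A I_A] = 1$ by linearity of the trace and normalization of the marginal. Your additional remarks (the route via Eq.~\eqref{eq:local_perception_operators_id} and the nonnegativity check) are correct but supplementary to what the paper asserts.
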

\end{property}

\section{\label{sec:witness}LPO witness and its properties}
Consider $A_x$, $B_y$ to be dichotomic observables with spectra $\{\pm 1\}$ (POVM, or projective, as required from the context). We define the two-party linear Bell operator $\mathcal{B}$ as
\begin{equation}\label{eq:Bell_operator}
    \mathcal{B} = \sum_{x,y}\alpha_{xy} A_x\otimes B_y + \sum_x \beta_x A_x\otimes I + \sum_y \gamma_y I\otimes B_y.
\end{equation}
Note, this formalism allows us to express CHSH, $I_{3322}$ [or $C_{3322}$ in our case, see Eq.~\eqref{eq:I3322_corr}], and many such generalized Bell-type inequalities. We can further write 
\begin{align}\label{eq:lpo_prod_ident}
    (A_x \otimes B_y)^A_\rho & = \PTr{B}{(I\otimes \rho_B)(A_x \otimes B_y)} = A_x\underbrace{\Tr(\rho_B B_y)}_{b_y}\notag\\
    (A_x \otimes B_y)^B_\rho & = \underbrace{\Tr(\rho_A A_x)}_{a_x} B_y.
\end{align}
using Eq.~\eqref{eq:lpo_prod_ident}, we can write 
\begin{equation}\label{eq:lpo_corr_prod}
    (A_x \otimes B_y)^A_\rho \otimes (A_x\otimes B_y)^B_\rho = a_xb_y (A_x\otimes B_y).
\end{equation}
We define the following witness
\begin{equation}
    \wit{Bell} := \sup_{A_x, B_y}\Tr(\rho\mathcal{B}),
\end{equation}
where $\sup_{A_x, B_y}$ implies the witness is optimized over all measurement settings~\cite{navascues_2007_BoundingSet}.
Thereafter, the definition of witness involving LPOs follow naturally as

\begin{align}
\begin{split}
    \wit[a]{LPO} &:= \sup_{A_x,B_y}\Tr(\rho_A(\mathcal{B})^A_\rho) = \sup_{A_x, B_y}\Tr(\rho_B (\mathcal{B})^B_\rho)\\
    & = \sup_{A_x,B_y}\left[\sum_{x,y}\alpha_{xy}a_x b_y + \sum_x\beta_x a_x+ \sum_y\gamma_y b_y\right],
\end{split}\label{eq:asym_lpo_wit}
\end{align}
which simplifies to 
\begin{equation}
    \wit[a]{LPO} = \sup_{A_x,B_y}\Tr[(\rho_A\otimes\rho_B)\mathcal{B}],
\end{equation}
where superscript `a' stands for asymmetric\footnote{Note that it is asymmetric because the witness can be expressed using property 3 Eq.~\eqref{eq:local_perception_operators_id} as, $\Tr(\rho_A (\mathcal{B})^A_\rho) = \Tr(\rho_B (\mathcal{B})^B_\rho)$---any single party can compute the whole witness.}. In this case, it is equivalent to computing the Bell operator on the product of the marginals $\rho_A\otimes\rho_B$. No factorization of $\rho$ is assumed. The point is that the scalar ignores inter-party correlations in $\rho$; effectively erasing quantum contributions. It remains locally accessible since it is determined entirely by locally perceived correlators together with the local marginal, and is computable by each subsystem\footnote{We mean that it can be computed using local measurements in both $A$ and $B$ along with classical communication of their results.}.
We also note that for local observables with outcomes restricted to $\{\pm 1\}$, the expectation values also lie within the same limit. Therefore, recalling the definitions of Eq.~\eqref{eq:lpo_prod_ident}, we can write 
\begin{equation}\label{eq:bounds_det}
    a_x = \Tr(\rho_A A_x) \in [-1,+1], ~~ b_y = \Tr(\rho_B B_y)\in [-1, +1],
\end{equation}
along with
\begin{equation}\label{eq:cxy}
    C_{xy} = \Tr(\rho A_x\otimes B_y).
\end{equation}
So, Alice can collect her means in a vector $\bm{a}:= (a_x)_{x=1}^m$, and Bob can do the same in vector $\bm{b}:= (b_y)_{y=1}^n$. This allows us to write the following 
\begin{equation}\label{eq:bilinear_form}
    F(\bm{a},\bm{b}) := \bm{a}^T \bm{\alpha} \bm{b} + \bm{\beta}^T \bm{a} + \bm{\gamma}^T \bm{b}
\end{equation}
which is bilinear and affine in $\bm{a}$ and $\bm{b}$. Where $\bm{\alpha}\;|\;[\bm{\alpha}]_{xy} = \alpha_{xy}$ is a rectangular matrix of size $m\times n$, and $\bm{\beta} = (\beta_x)$ and $\bm{\gamma} = (\gamma_y)$ are column matrices of size $m\times 1$ and $n\times 1$, respectively.
We now state the following important theorem
\begin{theorem}
    Let $L_\text{LHV}(\mathcal{B})$ denote the deterministic local hidden variable bound on $\mathcal{B}$, then 
    \begin{equation*}
    \wit[a]{LPO} \leq L_\text{LHV}(\mathcal{B})
    \end{equation*}

\end{theorem}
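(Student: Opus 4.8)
The plan is to reduce $\wit[a]{LPO}$ to a purely classical optimization over a hypercube and then recognize the optimum as the value of a deterministic local strategy. Starting from the explicit expansion in Eq.~\eqref{eq:asym_lpo_wit}, we have $\wit[a]{LPO} = \sup_{A_x,B_y} F(\bm a,\bm b)$, where $F$ is the bilinear–affine form of Eq.~\eqref{eq:bilinear_form} and $\bm a = (a_x)$, $\bm b = (b_y)$ are the vectors of local means. By Eq.~\eqref{eq:bounds_det}, every admissible choice of dichotomic observables produces $a_x = \Tr(\rho_A A_x)\in[-1,1]$ and $b_y = \Tr(\rho_B B_y)\in[-1,1]$, so the set of pairs $(\bm a,\bm b)$ realizable by measurements is contained in the cube $[-1,1]^m\times[-1,1]^n$. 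Hence
\begin{equation*}
\wit[a]{LPO}\;\le\;\max_{\bm a\in[-1,1]^m,\ \bm b\in[-1,1]^n} F(\bm a,\bm b).
\end{equation*}

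Next I would use the fact that $F$ is separately affine in $\bm a$ and in $\bm b$: with $\bm b$ held fixed, $\bm a\mapsto F(\bm a,\bm b) = \bm a^T(\bm\alpha\bm b+\bm\beta) + \bm\gamma^T\bm b$ is affine on the polytope $[-1,1]^m$, so it attains its maximum at a vertex, i.e.\ at some $\bm a=\bm s$ with $s_x\in\{\pm1\}$; holding that $\bm s$ fixed and repeating the argument in the $\bm b$ variable yields a vertex $\bm b=\bm t$ with $t_y\in\{\pm1\}$. Therefore the cube maximum equals $\max_{\bm s\in\{\pm1\}^m,\ \bm t\in\{\pm1\}^n} F(\bm s,\bm t)$, a maximum over the finitely many sign patterns.

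Finally I would identify $F(\bm s,\bm t) = \sum_{x,y}\alpha_{xy}s_x t_y + \sum_x\beta_x s_x + \sum_y\gamma_y t_y$ with the value of the Bell expression $\mathcal{B}$ evaluated on the deterministic local assignment in which Alice outputs $s_x$ for setting $x$ and Bob outputs $t_y$ for setting $y$. Since $L_\text{LHV}(\mathcal{B})$ is by definition the maximum of $\mathcal{B}$ over deterministic local strategies — and a general LHV model is a convex mixture of such strategies, hence cannot exceed this value — we get $F(\bm s,\bm t)\le L_\text{LHV}(\mathcal{B})$ for every sign pattern, and combining with the two displays above, $\wit[a]{LPO}\le L_\text{LHV}(\mathcal{B})$.

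I expect the only delicate point to be the vertex-attainment step: one must note that it is \emph{separate} affineness — not joint concavity, which does not hold for the bilinear cross term — that licenses pushing the maximum of $F$ to a vertex of the cube, and that the two one-sided maximizations can be performed sequentially without loss. The remaining ingredients, namely the cube containment from $|a_x|,|b_y|\le1$ and the reading of a sign pattern as a deterministic LHV strategy, are essentially definitional.
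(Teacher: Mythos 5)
Your proposal is correct and follows essentially the same route as the paper: relax the quantum-feasible means to the hypercube $[-1,1]^m\times[-1,1]^n$, use affineness in each block to push the maximum to a vertex in $\{\pm1\}^{m+n}$, and identify that vertex with a deterministic LHV assignment. The only cosmetic difference is that you obtain vertex attainment by sequentially fixing an optimizer and moving each block to a vertex, whereas the paper's appendix maximizes over $\bm a$ first and then argues via convexity of the resulting piecewise-linear function of each $b_y$; both are valid instances of the same idea.
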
\label{th:supremum_linear}
\begin{proof}
Let's define quantum feasible sets of local means
\begin{align}
    \mathcal{K}_A &:= \{\bm{a}\in \mathbb{R}^m :\exists \rho_A \text{ s.t. } a_x = \Tr(\rho_AA_x)\} \equiv f_A(\mathcal{S}(\Hil_A)),\\
    \mathcal{K}_B &:= \{\bm{b}\in \mathbb{R}^n :\exists \rho_B \text{ s.t. } b_y = \Tr(\rho_BB_y)\} \equiv f_B(\mathcal{S}(\Hil_B)).
\end{align}
Since given a $\rho$, $\rho_A$ and $\rho_B$ is guaranteed, we define $\mathcal{S}(\Hil_A)$ as a set of $\rho_A$ on $\Hil_A$. We further observe that $\mathcal{S}(\Hil_\J)$ is non-empty, convex, and have pure-states as extreme points. For a qubit $\mathcal{S}(\Hil_\J)$ is a Bloch ball. Therefore, the following maps
\begin{align}
    f_A: \mathcal{S}(\Hil_A) \to \left[ -1,1\right]^m,&~~ f_A(\rho_A):= \left(\Tr(\rho_AA_x)\right)_{x=1}^m,\label{eq:map_fa}\\
    f_B: \mathcal{S}(\Hil_B) \to \left[ -1,1\right]^n,&~~ f_B(\rho_B):= \left(\Tr(\rho_BB_y)\right)_{y=1}^n\label{eq:map_fb}
\end{align}
help us define the feasible sets as the images. And, due to these maps Eqs.~\eqref{eq:map_fa},~\eqref{eq:map_fb} the following inclusions exist
\begin{equation}
    \mathcal{K}_A \subseteq \left[-1, 1\right]^m, \quad \mathcal{K}_B \subseteq \left[-1, 1\right]^n.
\end{equation}
Since the local behaviors  form a polytope whose extreme points are deterministic assignments, maximizing an affine functional reduces to evaluating at the vertices~\cite{brunner_2014_bella, pitowsky_1991_Correlationpolytopes, fine_1982_HiddenVariables}. Hence, 
    \begin{align}
    \begin{split}
        &\wit[a]{LPO}\\
&=
\sup_{\substack{\bm{ a}\in\mathcal{ K}_A,\\ \bm{ b}\in\mathcal{ K}_B}} F(\bm a,\bm b)
\le
\max_{(\bm{ a},\bm{ b})\in{\pm1}^{m+n}} F(\bm{ a},\bm{ b})\\
&= L_{\text{LHV}}(\mathcal B).
    \end{split}
    \end{align}
Thereafter, observing that $F(\bm{a},\bm{b})$ is affine in $\bm{a}, \bm{b}$, and the outer relaxation hypercube $[-1,1]^m \times [-1,1]^n$ has vertices $\{\pm1\}^{m+n}$ corresponding to the deterministic LHV assignments \emph{i.e.,} the maximum of a convex function on a compact interval is achieved at the end points, we can prove this theorem (a detailed proof is provided in the Appendix~\ref{supp:A}). 
\end{proof}
For example, consider the CHSH scenario ($x,y \in \{0,1\}$). In this case, we have
\begin{equation}\label{eq:chsh_coeffs}
    \bm{\alpha} = \mqty(1 & 1 \\
    1 & -1), \quad \bm{\beta} = \bm{\gamma} = 0.
\end{equation}
We have, $F(\bm{a},\bm{b}) = a_0(b_0+b_1)+a_1(b_0-b_1)$. For fixed $\bm{b}$, $a_0 = \sign{b_0 + b_1}$, $a_1= \sign{b_0 - b_1}$. Therefore,
\begin{align}
\begin{split}
    \sup_{\bm{a}\in [-1,1]^2} F(\bm{a},\bm{b}) &= \abs{b_0 + b_1} + \abs{b_0 - b_1},\\
    & = 2\max \{\abs{b_0}, \abs{b_1}\} \leq 2.
\end{split}
\end{align}
Now that we have shown that $\wit[a]{LPO}$ is upper-bounded by deterministic LHV bounds, let us consider the quadratic version of this witness. We define (using Eqs.~\eqref{eq:lpo_corr_prod},~\eqref{eq:bounds_det}, and~\eqref{eq:cxy}) 
\begin{align}\label{eq:sym_lpo_wit}
\begin{split}
    \wit[s]{LPO} &:=\sup_{A_x, B_y}\Tr[\rho (\mathcal{B})^A_\rho\otimes(\mathcal{B})^B_\rho], \\
    & = \sup_{A_x, B_y}\left[\sum_{x,y}\alpha_{xy}a_xb_y C_{xy}+\sum_x \beta_x a^2_x + \sum_y \gamma_y b^2_y\right].
\end{split}
\end{align}
where the `s' in the superscript stands for symmetric\footnote{Here we use the term symmetric because the above expression can be written as $\Tr(\rho(\mathcal{B})^A_\rho\otimes(\mathcal{B})^B_\rho)$.}. We have used the identity $\Tr(\rho (M\otimes I)) = \Tr(\rho_A M)$. Unlike $\wit[a]{LPO}$, $\wit{LPO.s}$ is nonlinear in $\bm{a},\!\bm{b}$ and for product states is quadratic in them. Considering $\rho = \rho_A\otimes\rho_B$ with Eq.~\eqref{eq:sym_lpo_wit} ---
\begin{equation}\label{eq:quadratic}
    \wit[s]{LPO} = \sup_{A_x, B_y}\left[\sum_{x,y}\alpha_{xy}a^2_xb^2_y + \sum_x \beta_x a^2_x + \sum_y \gamma_y b^2_y\right].
\end{equation}
We note that while $\wit[a]{LPO}$ can be computed locally, $\wit[s]{LPO}$ on the other hand is a global witness because it requires us to compute $\Tr[\rho(A_x\otimes B_y)]$. Because of the nonlinear dependence on state-dependent terms, compared to $\wit[a]{LPO}$, computing generic LHV bounds on $\wit[s]{LPO}$ is not trivial. However, we can provide upper and lower bounds for general states. 

Consider $A_x = \bm{a}_x\cdot\bm{\sigma}$, and $B_y = \bm{b}_y\cdot\bm{\sigma}$. Also, let us denote the local Bloch-vector of subsystem $\J$ by $\bm{r}_\J$, so that the full two-qubit system can be written as 
\begin{equation}
    \rho = \frac{1}{4}\left(I\otimes I + \bm{r}_A\cdot\bm{\sigma}\otimes I + I \otimes \bm{r}_B\cdot\sigma + \sum_{i,j}\mathcal{T}^{ij}_\rho\sigma_i\otimes\sigma_j\right).
\end{equation}
Therefore, we can write the following
\begin{align}
    \begin{split}
        a_x &= \bm{r}_A\cdot\bm{a}_x, ~~ b_y = \bm{r}_B\cdot\bm{b}_y,\\
         C_{ij} &= \Tr(\rho A_x\otimes B_y) = \mathcal{T}_{\rho}^{ij}.
    \end{split}
\end{align}

We can have the following general geometry-free bound and non-negativity property of $\wit[s]{LPO}$,
\begin{theorem}\label{th:geometry_free_bound}
    For any two qubit state $\rho$, with measurement settings $A_x$ and $B_y$ with $x=1,\cdots,m$ and $y=1,\cdots,n$, we have
    \begin{align*}
        0 \leq \wit[s]{LPO}&\leq \norm{\bm{r}_A}\norm{\bm{r}_B}\sum_{x,y}\abs{\alpha_{xy}}\\
        &+ \norm{\bm{r}_A}^2\sum_x \abs{\beta_x}+ \norm{\bm{r}_B}^2\sum_y\abs{\gamma_y}.
    \end{align*}
\end{theorem}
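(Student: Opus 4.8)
The plan is to prove the two inequalities separately, working from the scalar form of the witness in Eq.~\eqref{eq:sym_lpo_wit}. Fix the Bloch parametrization $A_x=\bm{a}_x\cdot\bm{\sigma}$, $B_y=\bm{b}_y\cdot\bm{\sigma}$ with $\norm{\bm{a}_x}=\norm{\bm{b}_y}=1$ (forced by the $\{\pm1\}$ spectra), so that $a_x=\bm{r}_A\cdot\bm{a}_x$, $b_y=\bm{r}_B\cdot\bm{b}_y$ and $C_{xy}=\Tr[\rho(A_x\otimes B_y)]$, and abbreviate the bracketed functional in Eq.~\eqref{eq:sym_lpo_wit} by $\Phi(\{\bm{a}_x\},\{\bm{b}_y\})=\sum_{x,y}\alpha_{xy}a_xb_yC_{xy}+\sum_x\beta_xa_x^2+\sum_y\gamma_yb_y^2$, so that $\wit[s]{LPO}=\sup\Phi$. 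Both bounds then reduce to elementary estimates on $\Phi$ that are, respectively, uniform in and realized by particular measurement settings.

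For the right inequality it suffices to bound $\Phi$ uniformly over all admissible settings. I would apply the triangle inequality to $\Phi$ term by term and then estimate each factor: by Cauchy--Schwarz $\abs{a_x}=\abs{\bm{r}_A\cdot\bm{a}_x}\le\norm{\bm{r}_A}$, hence $a_x^2\le\norm{\bm{r}_A}^2$, and likewise $\abs{b_y}\le\norm{\bm{r}_B}$, $b_y^2\le\norm{\bm{r}_B}^2$; and $\abs{C_{xy}}\le1$ because $A_x\otimes B_y$ has spectrum $\{\pm1\}$, so $\norm{A_x\otimes B_y}_{\infty}=1$ and $\abs{\Tr[\rho(A_x\otimes B_y)]}\le1$ for any state. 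Assembling the three sums gives $\abs{\Phi}\le\norm{\bm{r}_A}\norm{\bm{r}_B}\sum_{x,y}\abs{\alpha_{xy}}+\norm{\bm{r}_A}^2\sum_x\abs{\beta_x}+\norm{\bm{r}_B}^2\sum_y\abs{\gamma_y}$, a bound independent of the settings; since $\Phi\le\abs{\Phi}$, taking the supremum yields the stated upper bound.

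For the left inequality, note that $\wit[s]{LPO}$ is a supremum, so it is enough to produce one admissible choice of settings with $\Phi\ge0$; I will in fact make $\Phi$ vanish. Choose every $\bm{a}_x$ to be a unit vector orthogonal to $\bm{r}_A$ (any unit vector if $\bm{r}_A=\bm{0}$) and every $\bm{b}_y$ orthogonal to $\bm{r}_B$. This is a legitimate assignment of dichotomic observables, and it forces $a_x=0$ and $b_y=0$ for all $x,y$, so every term of $\Phi$ vanishes. Hence $\wit[s]{LPO}=\sup\Phi\ge0$. (Because $\Phi$ is continuous on the compact product of unit spheres the supremum is in fact attained, though this is not needed.)

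If anything here counts as a crux it is small: one must use the normalization $\norm{\bm{a}_x}=\norm{\bm{b}_y}=1$ --- precisely what turns Cauchy--Schwarz into the factors $\norm{\bm{r}_A},\norm{\bm{r}_B}$ rather than loose constants --- together with the operator-norm bound $\abs{C_{xy}}\le1$, and one must check that the ``orthogonal-to-the-Bloch-vector'' settings invoked for the lower bound are admissible in Eq.~\eqref{eq:sym_lpo_wit} (they are: that supremum ranges over all projective dichotomic measurements, with no constraint tying them to the marginals of $\rho$). Everything else is bookkeeping with the triangle inequality.
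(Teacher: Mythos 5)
Your argument is correct and is essentially the paper's own proof: the upper bound via the triangle inequality together with $\abs{a_x}\le\norm{\bm{r}_A}$, $\abs{b_y}\le\norm{\bm{r}_B}$, $\abs{C_{xy}}\le 1$, and the lower bound by choosing settings with Bloch directions orthogonal to the marginals so that all $a_x=b_y=0$ (Appendix~\ref{supp:B}). Your explicit checks of the Cauchy--Schwarz step and of the admissibility of the orthogonal settings are fine but add nothing beyond the paper's reasoning.
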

\begin{proof}
    This lower-bound proof is trivial, the bound exists for choices of $A_x$ and $B_y$ such that $a_x = b_y =0$. For the second part, we note that each of marginal averages $a_x$ and $b_y$ are upper bounded by corresponding $\norm{\bm{r}_A}$ and $\norm{\bm{r}_B}$ (by definition), and $\abs{C_{xy}}\leq 1$. Thereafter, the upper-bound proof follows from triangle inequality (see Appendix~\ref{supp:B} for details).
\end{proof}
We see that for CHSH, from Eq.~\eqref{eq:chsh_coeffs}, and noting $m=n=2$, we get 
\begin{equation}
    \wit[s]{LPO}\leq 4\norm{\bm{r}_A}\norm{\bm{r}_B} \leq 4.
\end{equation}
This upper-bound can be made more tight. For that we consider the following scenario, each parties' measurement settings are mutually orthogonal in corresponding real space. For this work, we will deal with at most $m=n=3$, therefore, all the $A_x$ represent mutually orthogonal directions in $\mathbb{R}^3$, and same for the $B_y$s. Under this assumption, the following theorem holds.
\begin{theorem}\label{th:supremum_ortho}
Given Alice and Bob use mutually orthogonal measurement settings, we can have\footnote{We use the subscript ``$\perp$'' to denote optimization restricted to mutually orthogonal local Bloch directions for each party.}
\begin{align*}
\begin{split}
    \wit[s]{LPO, $\perp$}&\leq \norm{\abs{\bm{\alpha}}}_2 \norm{\bm{r}_A}\norm{\bm{r}_B} + \max(\beta_x)_+\norm{\bm{r}_A}^2\\
    &+ \max(\gamma_y)_+\norm{\bm{r}_B}^2,
\end{split}
\end{align*}
where, $\abs{\bm{\alpha}}$ denotes entry-wise absolute value, $\norm{\cdot}_2$ denotes spectral norm on matrices (largest singular value), and Euclidean norm for vectors. We use the notation $(a)_+ := \max\{a,0\}$. 
\end{theorem}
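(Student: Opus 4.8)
The plan is to bound each of the three terms in the expression $\wit[s]{LPO} = \sup\big[\sum_{x,y}\alpha_{xy}a_xb_yC_{xy} + \sum_x\beta_x a_x^2 + \sum_y\gamma_y b_y^2\big]$ separately, exploiting the orthogonality of the measurement directions. Write $a_x = \bm r_A\cdot\bm a_x$, $b_y=\bm r_B\cdot\bm b_y$, $C_{xy}=\bm a_x^T\mathcal T_\rho\bm b_y$ with $\|\bm a_x\|=\|\bm b_y\|=1$. Since $\{\bm a_x\}_{x=1}^m$ are orthonormal in $\mathbb R^3$, the numbers $a_x$ are the components of $\bm r_A$ in an orthonormal (partial) basis, so $\sum_x a_x^2 \le \|\bm r_A\|^2$, and likewise $\sum_y b_y^2\le\|\bm r_B\|^2$. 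That immediately handles the linear‑in‑one‑party pieces: $\sum_x\beta_x a_x^2 \le \max_x(\beta_x)_+\sum_x a_x^2 \le \max(\beta_x)_+\|\bm r_A\|^2$, and symmetrically for the $\gamma_y$ term — the $(\cdot)_+$ appearing because only the positive coefficients can push the supremum up, negative ones are best zeroed out by a compatible choice of direction.

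For the bilinear term I would first bound $|C_{xy}|\le 1$ is too lossy; instead keep $C_{xy} = \bm a_x^T\mathcal T_\rho\bm b_y$ and estimate $\big|\sum_{x,y}\alpha_{xy}a_xb_yC_{xy}\big| \le \sum_{x,y}|\alpha_{xy}|\,|a_x|\,|b_y|$, using $|C_{xy}|\le\|\mathcal T_\rho\|_2\le 1$. Then the remaining quantity $\sum_{x,y}|\alpha_{xy}|\,|a_x||b_y| = |\bm a|^T|\bm\alpha|\,|\bm b|$, where $|\bm a|=(|a_x|)_x$, $|\bm b|=(|b_y|)_y$ are nonnegative vectors with $\||\bm a|\|_2^2 = \sum_x a_x^2\le\|\bm r_A\|^2$ and similarly for $|\bm b|$. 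By the variational characterization of the spectral norm, $|\bm a|^T|\bm\alpha|\,|\bm b| \le \||\bm\alpha|\|_2\,\||\bm a|\|_2\,\||\bm b|\|_2 \le \||\bm\alpha|\|_2\,\|\bm r_A\|\,\|\bm r_B\|$. Adding the three bounds gives exactly the claimed inequality. The orthogonality hypothesis is what upgrades the crude $\sum_x|a_x|\le m\|\bm r_A\|$-type estimate (which would reproduce only Theorem~\ref{th:geometry_free_bound}) to the $\ell^2$ bound $\sum_x a_x^2\le\|\bm r_A\|^2$, and it is precisely this that lets the Euclidean/spectral norm replace the $\ell^1$ norm $\sum_{x,y}|\alpha_{xy}|$.

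The main obstacle I anticipate is making sure the three bounds can be saturated, or at least coexist, under a single admissible choice — i.e. that the supremum over $A_x,B_y$ (equivalently over orthonormal frames and over $\bm r_A,\bm r_B$ constrained to be valid marginals of $\rho$) genuinely respects all three estimates simultaneously rather than term‑by‑term. Since the statement is only an upper bound, this is not strictly necessary: each term is bounded pointwise over the feasible set, so their sum is bounded by the sum of the bounds, and the supremum inherits it. A secondary subtlety is the bound $|C_{xy}|=|\bm a_x^T\mathcal T_\rho\bm b_y|\le\|\mathcal T_\rho\|_2$ together with $\|\mathcal T_\rho\|_2\le 1$ for a physical two‑qubit state; this is standard but should be invoked explicitly. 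I would relegate the detailed inequalities to an appendix in the same spirit as Appendices~\ref{supp:A} and~\ref{supp:B}, presenting here only the decomposition into the three terms and the role of orthonormality.
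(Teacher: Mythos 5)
Your proposal is correct and follows essentially the same route as the paper's proof: bound $\abs{C_{xy}}\leq 1$, pass to entry-wise absolute values and the spectral-norm inequality $\abs{\bm a}^T\abs{\bm\alpha}\abs{\bm b}\leq \norm{\abs{\bm\alpha}}_2\norm{\bm a}_2\norm{\bm b}_2$, use Pythagoras under the orthogonality assumption to get $\sum_x a_x^2\leq\norm{\bm r_A}^2$ (and likewise for $\bm b$), and handle the $\beta,\gamma$ terms via $\max(\cdot)_+$. Your detour through $\norm{\mathcal T_\rho}_2\leq 1$ is harmless but unnecessary, since $\abs{C_{xy}}\leq 1$ already follows from the $\pm1$ spectra, which is exactly what the paper invokes.
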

\begin{proof}
     Let's define the matrix $\bm{C}$ such that $[\bm{C}]_{x,y} = C_{xy}$. Therefore, we can write from Eq.~\eqref{eq:sym_lpo_wit} 
    \begin{equation}
        \sum_{x,y}\alpha_{xy}a_xb_y c_{xy} = \bm{a}^T (\bm{\alpha}\circ \bm{C})\bm{b},
    \end{equation}
    where $\circ$ denotes Hadamard product of the two matrices. We further note that $\abs{c_{xy}}\leq 1$, thus we can write
    \begin{align}\label{eq:corr_theorem}
    \begin{split}
        \bm{a}^T (\bm{\alpha}\circ \bm{C})\bm{b} &\leq \abs{\abs{\bm{a}}^T\abs{\bm{\alpha}}\abs{\bm{b}}},\\
        & \leq \norm{\abs{\bm{a}}^T}_2\norm{\abs{\bm{\alpha}}}_2\norm{\abs{\bm{b}}}_2.
    \end{split}
    \end{align}

    Now, under the orthogonality assumption, using Pythagoras's theorem, we get
    \begin{equation}
        \norm{\bm{a}}_2^2 = \sum_x a_x^2 \leq \norm{\bm{r}_A}^2, ~~\norm{\bm{b}}_2^2 = \sum_y b_y^2 \leq \norm{\bm{r}_B}^2.
    \end{equation}
    Therefore, $\norm{\abs{\bm{a}}}_2\leq \norm{\bm{a}}_2 \leq \norm{\bm{r}_A}$ and $\norm{\abs{\bm{b}}}_2\leq \norm{\bm{b}}_2 \leq \norm{\bm{r}_B}$, which implies from Eq.~\eqref{eq:corr_theorem}
    \begin{equation}
        \bm{a}^T (\bm{\alpha}\circ \bm{C})\bm{b} \leq \norm{\abs{\bm{\alpha}}}_2\norm{\bm{r}_A}\norm{\bm{r}_B}.
    \end{equation}
    Now, because $a_x^2 \geq 0$ and $\sum_x a_x^2 \leq \norm{\bm{r}_A}^2$ and similarly for $b_x$, the maximum of $\sum_x \beta_x a_x^2$ and $\sum_y \gamma_y b_y^2$ over orthogonal setting is 
    \begin{equation}
    \bm{\beta}^T\bm{a}\leq \max(\beta_x)_+\sum_x a_x^2 \leq \max(\beta_x)_+ \norm{\bm{r}_A}^2,
    \end{equation}
    and similarly for $\bm{\gamma}^T\bm{b}$.
    Therefore, combining all of these, we obtain the required upper-bound.    
\end{proof}
Again, if we consider CHSH [Eq.~\eqref{eq:chsh_coeffs}], we get 
\begin{equation}
    \wit[s]{LPO, $\perp$}\leq 2\norm{\bm{r}_A}\norm{\bm{r}_B}\leq 2.
\end{equation}
Also, if the $\J^\Th$ marginal is maximally mixed \emph{i.e.,} $\norm{\bm{r}_\J}=0$, we find the witness value 0. Therefore, we see under mutually orthogonal setting condition, the upper bound is tighter and it is 2. In fact, we will now see, that for pure states and other various states, this bound remains the numerically achievable bound. Therefore, we conjecture that for all other conceivable measurement settings, the upper-bound of $\wit[s]{LPO}$ saturates at 2.

\section{\label{sec:numerics}Numerical results}
First, let us compute the value of LPO witness (LPOW)\footnote{Please note, this means a functional of locally perceived first‑moment statistics. It is not an entanglement witness and not a Bell inequality; it evaluates a Bell‑type functional after the LPO map has discarded inter‑party correlations.} for a pure state and maximally mixed state, which will numerically validate our theorem~\ref{th:supremum_ortho}. Firstly, we can use the Bell measurement settings for the state 
\begin{equation} 
\ket{\psi}= \dfrac{1}{\sqrt{2}}\left(\ket{01}-\ket{10}\right)
\end{equation}
with the measurement settings---
\begin{align}\label{eq:bell_measurement}
\begin{split}
    A_1 = \sigma_x,~&~ A_2 = \sigma_z,\\
B_1 = \frac{1}{\sqrt{2}}(\sigma_x+\sigma_z),~&~ B_2=\frac{1}{\sqrt{2}}(\sigma_x-\sigma_z).
\end{split}
\end{align}
We consider a standard bipartite system $AB$ with observables $A_i$ for Alice and $B_j$ for Bob. Using the usual correlation function $C_{ij}$ for LPO, we consider perception of correlation, denoting $\mathcal{X}_{ij}=A_i\otimes B_j$, we define the following:
\begin{align}\label{eq:perceived_global}
    \begin{split}
        (\mathcal{X}_{ij})_\rho^A &= A_ib_j\,,\\
        (\mathcal{X}_{ij})_\rho^B &= a_iB_j\,.
    \end{split}
\end{align}
Using this definition, we can redefine new LPO based correlation $\clpo{ij}$ as
\begin{equation}\label{eq:lpo_correlation}
    \clpo{ij} = \Tr(\rho\,( \mathcal{X}_{ij})_\rho^A\otimes (\mathcal{X}_{ij})_\rho^B) = a_i b_j C_{ij}.
\end{equation}
Using the definition of $\clpo{ij}$ with the settings in Eq.~\eqref{eq:bell_measurement} we get $\wit[s]{LPO} = 0$. For book-keeping purposes we define the following
\begin{equation}
   \schsh = \wit{Bell},~~ \slpo = \wit[s]{LPO}.
\end{equation}
To compute $\slpo$ for a pure two-qubit composite (say, $\dyad{00}$) we note, that we will use Eq.~\eqref{eq:quadratic}, and we get, (since there is pure state involved, we have $\norm{\bm{r}_\J}=1$)
\begin{equation}
    \slpo = 2.
\end{equation}

To compare with the corresponding $\schsh$, we will use
the Horodecki criteria~\cite{horodecki_1995_violating} to determine whether a state accommodate local hidden variable (LHV) description, via the correlation matrix formalism. First, we define the correlation matrix as 
\begin{equation}
    \mathcal{T}_\rho^{ij} = \Tr(\rho \sigma_i\otimes\sigma_j) = \Tr(\rho\sigma_{ij}),
\end{equation}
where, $\sigma_{ij} = \sigma_i\otimes\sigma_j$. Then, the Horodecki criteria says, that if the two largest singular values ($s_1, \, s_2$) of the matrix $\mathcal{T}_{\rho}$ satisfies the following inequality, we can say that the state $\rho$ accommodates LHV description,
\begin{equation}\label{eq:horodecki_criteria}
    \sqrt{s_1^2+s_2^2}\leq 1.
\end{equation}

Interestingly, if we consider the state, $\mathrm{I}_4/4$, a two-qubit maximally mixed state, then we get $\slpo=0$. We have already shown that for any state which is locally maximally mixed state, $\slpo=0$. From this, it follows that for states like Werner states~\cite{werner_1989_quantuma}, Bell-diagonal states~\cite{lang_2010_quantum,horodecki_1997_separability}, we will have $\slpo=0$, irrespective of their Bell-CHSH values. 

Naturally, we begin to ponder, what are the other different scenarios where LPOW differ from Bell-CHSH or similar measures, and whether they provide more insights than what we already know from Bell-CHSH.

\subsection{More than two measurement settings}
To see the benefits of LPOW, we consider the generalization of the Bell-CHSH scenario. The Bell-CHSH test is the simplest setting with two dichotomic measurement setup for Alice and Bob each. However, the Bell-CHSH test may not be sufficient to test for nonlocality in states. Some states may exist that shows no-violation under Bell-CHSH test, but other tests reveal their nonlocal character~\cite{collins_2004_relevant}. We can use the three dichotomic measurement settings for each Alice and Bob and reveal the hidden nonlocal behavior of a given state which is not revealed under a standard Bell-CHSH test~\cite{collins_2004_relevant, tavakoli_2020_does}. Following convention, we call it the $I_{3322}$ test. It can be written as:
\begin{align}\label{eq:i3322_pr}
    \begin{split}
        I_{3322} &= P(A_1\,B_1) + P(A_1\,B_2) + P(A_1\,B_3) + P(A_2\,B_1)\\
        &+ P(A_2\,B_2)  - P(A_2\,B_3)+ P(A_3\,B_1) - P(A_3\,B_2)\\
        &- P(A_1) - 2P(B_1) - P(B_2) \leq 0,
    \end{split}
\end{align}
where \( P(A_i\,B_j)\) is the probability that both the outcomes of Alice and Bob are zero. Using the correlation functions we can express the conditional joint probabilities as 
\begin{align}\label{eq:cond_joint_pr}
\begin{split}
    &P(a,b|A,B) \\ &= \dfrac{1}{4}\left(1+(-1)^aC_{AI}+(-1)^bC_{IB}+(-1)^{a+b}C_{AB} \right).
\end{split}
\end{align}
For our case, \(a=b=0\). We use \( C_{AI} = \Tr(\rho A\otimes I)\) and \( C_{IB} = \Tr(\rho I\otimes B)\). Using this, we can rewrite the expression for \(I_{3322}\) in form of correlators as~\cite{sliwa_2003_symmetries}
\begin{align}\label{eq:I3322_corr}
    \begin{split}
        C_{3322} &= C_{11} + C_{12} + C_{13} + C_{21}+ C_{22}-C_{23} \\
        &+C_{31} - C_{32} + C_{1I} + C_{2I}- C_{I1} - C_{I2} \leq 4.
    \end{split}
\end{align}
It can be shown, that the corresponding $\wit[s]{LPO, 3322}$ defined by (using Eq.~\eqref{eq:I3322_corr})
\begin{align}\label{eq:3322}
\begin{split}
\wit[s]{LPO, 3322} &= \sup_{A_x, B_y}\left(\bm{a}^T \bm{\alpha} \bm{b} + \bm{\beta}^T \bm{a} + \bm{\gamma}^T \bm{b}\right),\\
\text{with }&  \bm{\alpha} = \mqty(1 & 1 & 1\\
    1 & 1 & -1\\
    1 & -1 & 0), ~~ \bm{\beta} = \mqty(1 \\ 1 \\ 0), ~~ \bm{\gamma} = \mqty(-1 \\ -1 \\ 0),
\end{split}
\end{align}
has a tight upper-bound at $(1+\sqrt{3})\norm{\bm{r}_A}\norm{\bm{r}_B}+\norm{\bm{r}_A}^2$ (see Appendix~\ref{supp:C}).
\begin{figure}[t!]
    \centering
    \includegraphics[width=\columnwidth]{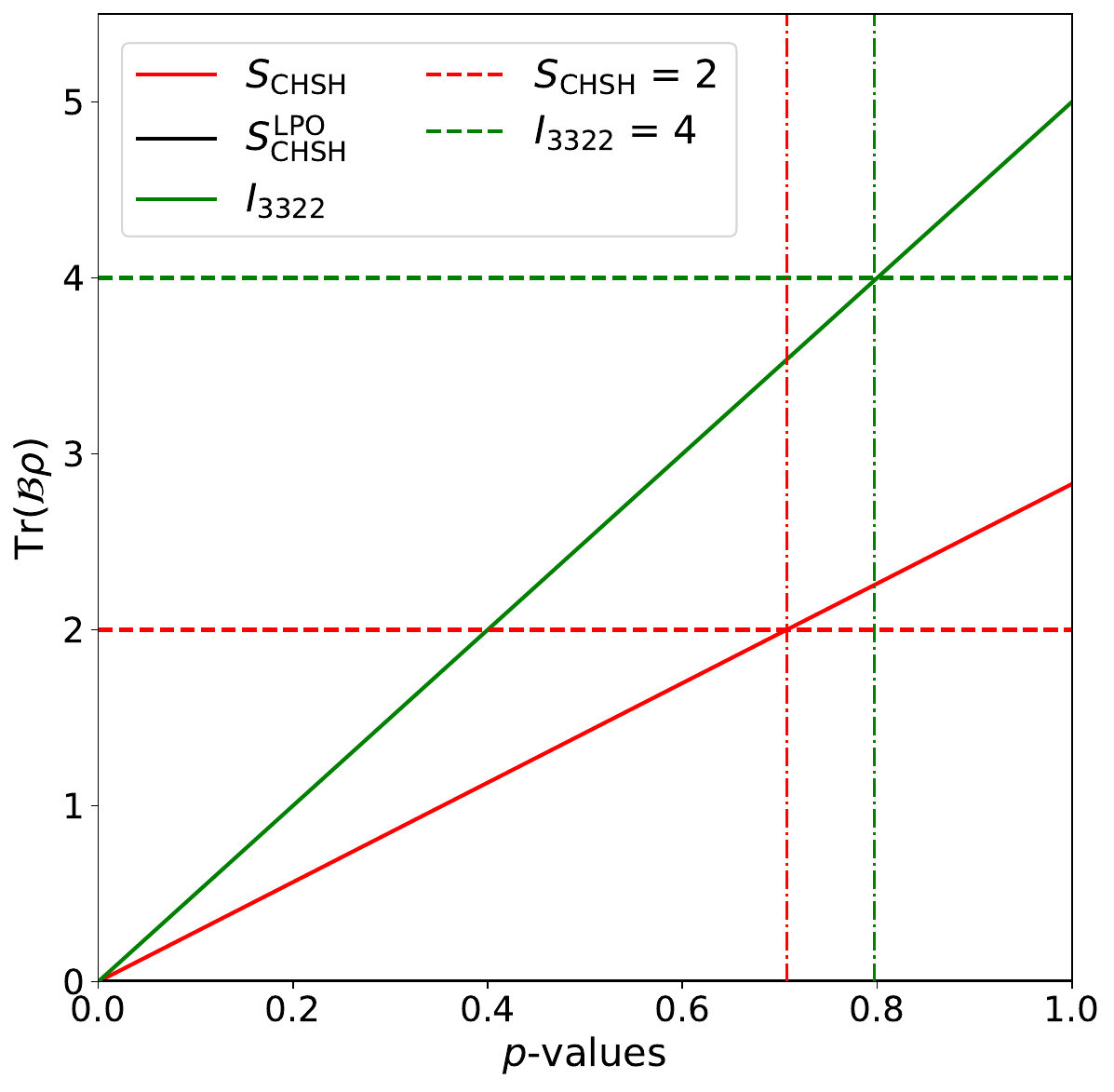}
    \caption{The witness values $\Tr(\mathcal{B}\rho)$ for various inequalities --- $I_{3322}$, $\schsh$, and $\slpo$ plotted against changing $p$ values of Werner ($\rho_{\text{w}}$) state as in Eq.~\eqref{eq:werner_state}. Since $\slpo$ is constant and zero, it coincides with the $x$-axis. The horizontal and vertical lines of each color denote the local bounds and the corresponding $p$ values for the corresponding inequalities, respectively.}
    \label{fig:werner}
\end{figure}
In literature, there exist certain states which, for any given measurement setting, does not violet Bell-CHSH tests. For example, consider the state~\cite{collins_2004_relevant}
\begin{align}\label{eq:sigma_state}
    \begin{split}
        \sigma &= 0.85\, \mathrm{P}_{\scalebox{0.5}{$\ket{\phi}$}} + 0.15\, \mathrm{P}_{\scalebox{0.5}{$\ket{01}$}}\,\\
        \text{with, } \ket{\phi} &= \dfrac{1}{\sqrt{5}}\left(2\ket{00}+\ket{11} \right),
    \end{split}
\end{align}
with $\mathrm{P}_{\scalebox{0.5}{$\ket{\phi}$}}$ is a projector onto the state $\ket{\phi}$. It can be shown, that for any measurement settings involving two measurements for Alice and Bob each, the $S_{\scalebox{0.5}{CHSH}}$ value is always less than 2. On the other hand, we find $\slpo=0$. However, for the given measurement settings~\cite{tavakoli_2020_does}, 
\begin{align}
    &O(\phi,\theta)  = \sin(\theta)\left(\cos(\phi)\sigma_x + \sin(\phi)\sigma_y\right)+\cos(\theta)\sigma_z,\\
    \begin{split}
        & A_1 = O(\eta,0),~ A_2 = O(-\eta,0),~ A_3 = O(-\frac{\pi}{2},0)\,,\\
        & B_1 = O(-\zeta,0),~ B_2 = O(\zeta,0),~ B_3 = O(\frac{\pi}{2},0)\,,
    \end{split}
\end{align}
with $\cos(\eta) = \sqrt{7/8}$ and $\cos(\zeta) = \sqrt{2/3}$, $C_{3322}$ from Eq.~(\ref{eq:I3322_corr}) is $\sim 4.05$. Noting that the reduced states are not maximally mixed (thus avoiding trivial zero of LPOW), we conclude that the $I_{3322}$ test shows non-local behavior undetected by Bell-CHSH but corroborated by LPOW. This is true for a class of cases, cases which are not pathologically designed. 

\begin{figure}[t!]
    \centering
    \includegraphics[width=\columnwidth]{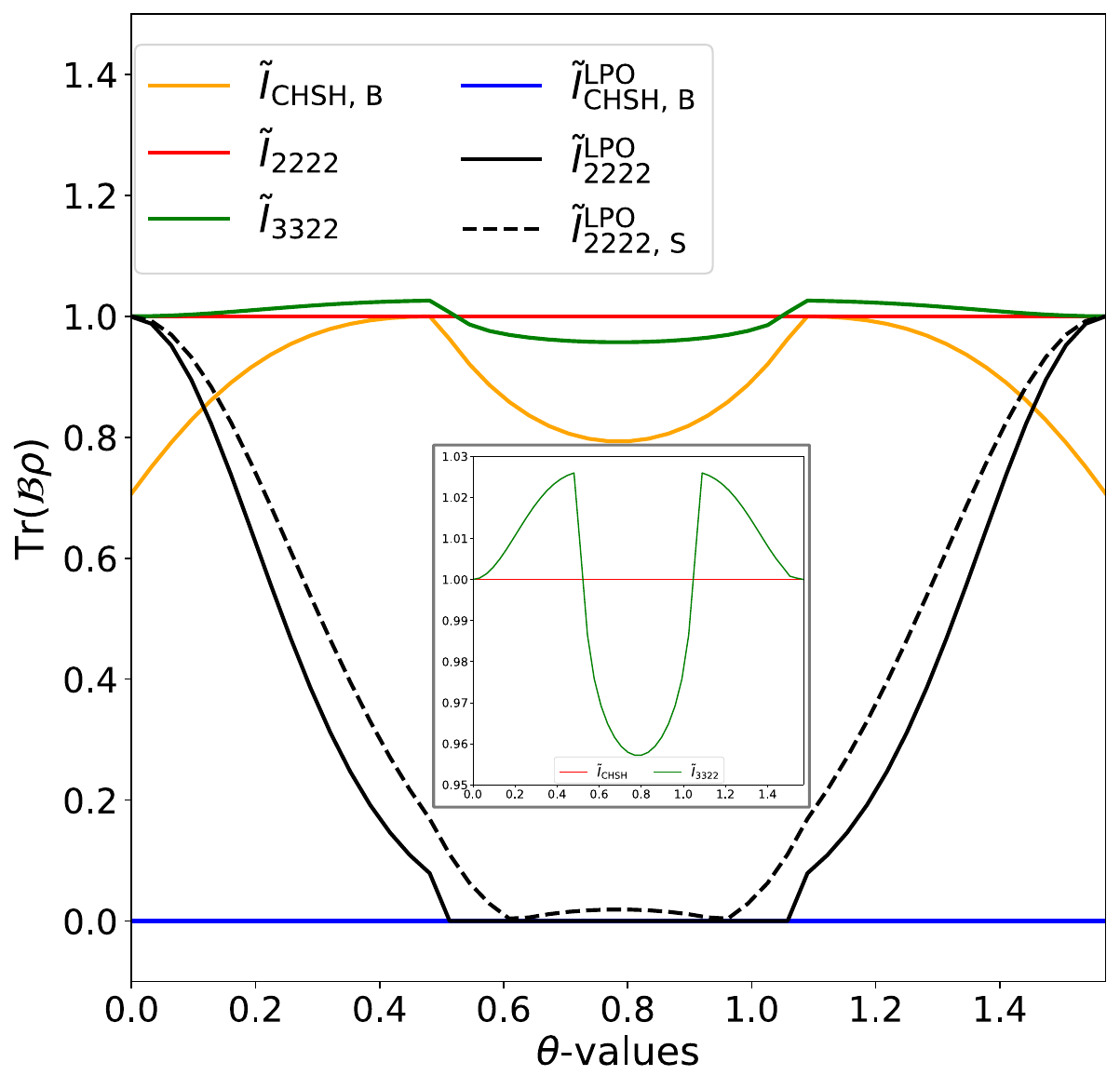}
    \caption{Value of various Bell-type inequalities for the $\ket{\text{CG}}$ (Eq.~\eqref{eq:CG_state}). The inequalities are normalized as per Eq.~\eqref{eq:chsh_coeffs}. The particular values of $\tilde{I}_{\text{CHSH, B}}$ are not optimized, and computed under standard Bell measurement setting Eq.~\eqref{eq:bell_measurement}. The inequality values labeled by $\tilde{I}^{\text{LPO}}_{\text{CHSH, B}}$ are also computed using the same settings. The black dotted curves denote inequalities computed using the settings that maximize the $\schsh$. In the inset, we have zoomed in section that corroborates the results of Ref.~\cite{collins_2004_relevant} and act as benchmark to our computation.}
    \label{fig:CG}
\end{figure}
We consider the Werner states
\begin{equation}\label{eq:werner_state}
     \rho_{\text{w}} = p\dyad{\psi}+\dfrac{1-p}{4}\mathrm{I}_4,
\end{equation}
with $\ket{\psi} = \frac{1}{\sqrt{2}}(\ket{01}-\ket{10})$ and employ the $I_{3322}$, $\schsh$, and $\slpo$ calculations, see Fig.~\ref{fig:werner}. We use $\mathcal{B}$ to denote a general inequality operator whereas the value of that inequality for a state $\rho$ is given by the witness $\Tr{\mathcal{B}\rho}$. We further compute these inequalities for a different class of states---states that have known $I_{3322}$ violation but no such violation for $\schsh$~\cite{collins_2004_relevant}. However, here, we normalize the inequalities to compare them better. Our normalization prescription is as under-
\begin{align}\label{eq:inequality_norm}
\begin{split}
    \tilde{I}_{3322} &= I_{3322} + 1, ~~\text{from Eq.~\eqref{eq:i3322_pr}},\\
    \tilde{I}_{2222} &= 2I_{2222}+1, ~~\text{from Ref.~\cite{collins_2004_relevant}},\\
    \tilde{I}^{\text{LPO}}_{2222} &= 2I^{\text{LPO}}_{2222} +1, ~~\text{from Eq.~\eqref{eq:cond_joint_pr}, and~\eqref{eq:lpo_correlation} with Ref.~\cite{collins_2004_relevant}}.
\end{split}
\end{align}
We consider the following class of states, henceforth called $\ket{\text{CG}}$ to refer to Ref.~\cite{collins_2004_relevant}, defined as 
\begin{align}\label{eq:CG_state}
    \begin{split}
         \ket{\text{CG}} &= \lambda_{\text{CHSH}}\mathrm{P}_{\ket{\theta}}+(1-\lambda_{\text{CHSH}})\mathrm{P}_{\ket{01}},\\
         \text{with, }  \ket{\theta} &= \cos(\theta)\ket{00}+\sin(\theta)\ket{11}.
    \end{split}
\end{align}
$\lambda_{\text{CHSH}}$ are chosen so that $\ket{\text{CG}}$ has maximum permissible LHV value for Bell-CHSH test (here, 1). \citet{collins_2004_relevant} showed that $\ket{\text{CG}}$ are nonlocal under $I_{3322}$ test (see inset of Fig.~\ref{fig:CG}). We additionally show that this behavior (deviation from LHV description) is also captured in the LPOW as well, see Fig.~\ref{fig:CG}.

Thereafter, we present the following two cases. In the first, we consider a two-qubit state which is known to have classicality in the Bell-CHSH and other inequalities, and we show it's behavior under the LPOW. Consider, 
\begin{equation}
\label{eq:classical}
  \ket{\theta,\beta} = \cos(\theta)\ket{00}+e^{\imi \beta}\sin(\theta)\ket{10}
\end{equation}
Here, we have parametrized the state such a way that for $\theta=0$, we have a pure state $\ket{00}$, for $\theta=\pi/2$ we have another pure state $\ket{10}$ modulated by a phase factor dependent on $\beta$. In this scenario, we see that the Bell-CHSH test and the $I_{3322}$ tests show no sensitivity to the parameters, but the LPO being inherently state dependent, shows a clear sensitivity to the parameter settings as seen in Fig.~\ref{fig:classical}.
\begin{figure}[t!]
    \centering
    \includegraphics[width=\columnwidth]{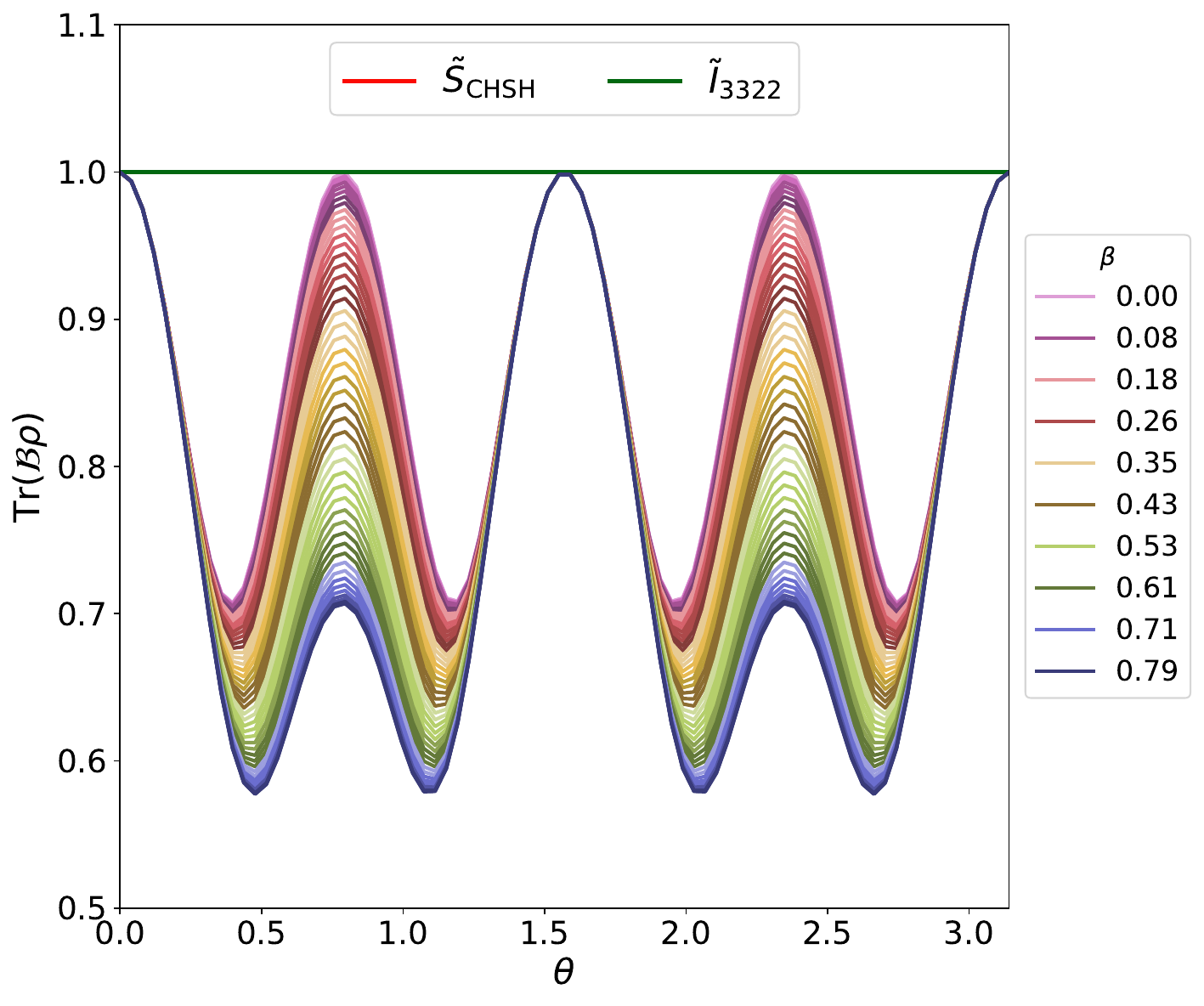}
    \caption{The normalized inequality values for the classical state given by Eq.~\eqref{eq:classical}. The plot shows that LPOW is sensitive to the internal structure of the state. Not only for pure states, but also for other superpositions it shows how the effect of phase deviates the state from pure-state behavior.}
    \label{fig:classical}
\end{figure}

In the second case, we consider a state
\begin{align}\label{eq:transition_state}
    \begin{split}
        \rho &= (1-p)\dyad{\psi^+}+ p\dyad{00},\\
        \text{with } \ket{\psi^+} &= \dfrac{1}{\sqrt{2}}(\ket{01}+\ket{10}).
    \end{split}
\end{align}
\begin{figure}[t!]
    \centering
    \includegraphics[width=\columnwidth]{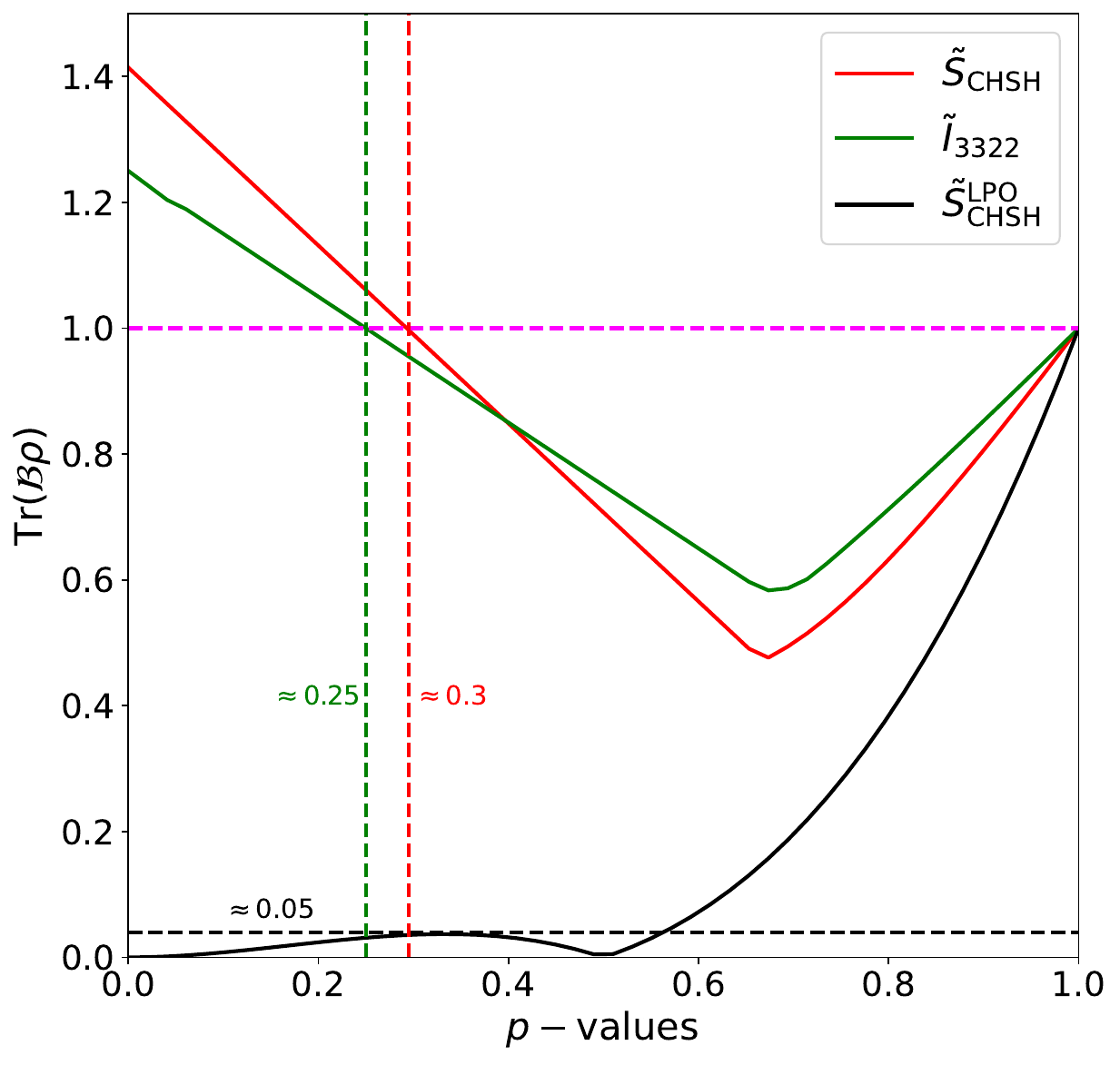}
    \caption{The normalized witness values for the states with transition as in Eq.~\eqref{eq:transition_state}. The vertical dashed lines show the transition points for CHSH and 3322 witnesses in red and green respectively (with the corresponding $p_c$ value mentioned). The solid black line represents how LPOW changes for $p$. }
    \label{fig:transition}
\end{figure}
In this case, we see for $p=0$, we have one of the triplet states, which is entangled. And for $p=1$, we have a pure state $\ket{00}$. Other than that, this particular case also suggests there is a transition point $p_c$, beyond which the state is classical. Now, if we consider the Fig.~\ref{fig:transition}, we see that according to the Bell-CHSH and $I_{3322}$ (normalized) tests, a $p-$value greater than $0.3$ and $0.25$ is required, respectively, to not violate the said inequalities. Even though LPOW starts deviating from purely quantum behavior before the $p_c$ for CHSH and $I_{3322}$ cases, it requires $p>0.5$ to convincingly avoid the state being non-classical (at $p=0.5$ the state produces locally maximally mixed states). We note that LPOW value increases slightly between $0.1\leq p < 0.5$ to a value $\approx 0.05$ before returning to $0$ again, which suggests that for these values of $p$, the state of Eq.~\eqref{eq:transition_state} may not be completely quantum. This also implies, there maybe other inequalities that may show lack of LHV theory for $p<0.5$. 
\subsection{Tripartite states}
It is not only the two-qubit systems that can be tested using LPOW. One can extend this formalism for three-qubit system as well. Here, we define our correlation operators for Alice, Bob, and Charlie as 
\begin{align}\label{eq:tr_lpo_corr}
    \begin{split}
         (X)^A_\rho &= \Tr_{BC}((I\otimes\rho_B\otimes\rho_C)X),\\
         (X)^B_\rho &= \Tr_{AC}((\rho_A\otimes I\otimes\rho_C)X),\\
         (X)^C_\rho &= \Tr_{AB}((\rho_A\otimes\rho_B\otimes I)X).
    \end{split}
\end{align}
We can define correlation operators as
\begin{align}\label{eq:mermin_measurement}
    \begin{split}
         \mathcal{X}_{xxx} = \sigma_x\otimes\sigma_x\otimes\sigma_x, ~~&~~ \mathcal{X}_{xyy} = \sigma_x\otimes\sigma_y\otimes\sigma_y,\\
\mathcal{X}_{yxy} = \sigma_y\otimes\sigma_x\otimes\sigma_y, ~~&~~ \mathcal{X}_{yyx} = \sigma_y\otimes\sigma_y\otimes\sigma_x .
    \end{split}
\end{align}
Using Mermin's inequality~\cite{mermin_1990_extreme} as under ---
\begin{align}
\begin{split}
     ~~~~\Tr(\rho\mathcal{B}_{\text{Mermin}})&\leq 2,\\
     \text{with, } \mathcal{B_{\text{Mermin}}} &= \mathcal{X}_{xxx}-\mathcal{X}_{xyy}-\mathcal{X}_{yxy}-\mathcal{X}_{yyx}.
\end{split}
\end{align}
It can be shown, that the GHZ state, $\frac{1}{\sqrt{2}}(\ket{000}+\ket{111})$ using the measurement settings of Eq.~\eqref{eq:mermin_measurement} violates Mermin's inequality with a violation of 4, however, the corresponding LPOW defined using Eq.~\eqref{eq:tr_lpo_corr} produces a value of $0$. On the other hand, for a given pure state (say, $\ket{000}$), the LPOW value is 2 (which is the classical limit of Mermin's inequality). So, LPO is a scalable method that can be incorporated to identify a given $n-$qubit state whether it accommodates a classical description or accommodates non-local theory.

\section{\label{sec:conclude}Conclusion and discussion}
To conclude, we have introduced the local perception operator (LPO) formalism to compute the locally perceived global correlators that can then be used to compute relevant Bell-type LPO witnesses (LPOW). By construction LPOs are state dependent, they project a global observable onto what is inferable from the local marginals. We introduce a one-sided (asymmetric) and two-sided (symmetric) LPOW, $\wit[a]{LPO}$ and $\wit[s]{LPO}$, respectively. The asymmetric witness is locally implementable (either by Alice or Bob---they will arrive at the same conclusion), is bilinear in the local means and affine in the coefficients, and---when optimized over settings---is constrained by the LHV bound of the underlying Bell functional, thus providing a sufficient certificate of compatibility with an LHV description for the LPO-processed data.

The symmetric witness is global and nonlinear; instead of a single state-independent benchmark, its behavior is controlled by the local marginals and the measurement geometry. We derived state-dependent, geometry-free upper bounds and sharpened them under orthogonal settings; these bounds collapse to zero when a marginal is maximally mixed. Our numerical results further suggest a CHSH-type upper-bound of 2 for the optimized symmetric case, which we state as a conjecture.

Operationally, the two witnesses complement the standard ``violation'' logic of Bell tests. When conventional violations are absent or inconclusive, $\wit{LPO}$s organize what can be concluded from locally perceived statistics alone: maximally mixed marginals lead both witnesses to vanish (hence no verdict), while nontrivial marginals are constrained by the bounds we obtain and can rule in compatibility with LHV models for the LPO-processed observables. We show that our LPOW can be extended to multipartite cases, however, proving the bounds analytically can be quite troublesome. 

Looking ahead, a natural direction is to map the boundary---within the LPO framework---between classes of entangled states (beyond maximally entangled) and states compatible with LHV models for the LPO-processed data. This includes investigating whether suitably normalized LPOWs admit a meaningful separability distance and how such indicators interact with known entanglement classes. We can also study how LPO can be implemented in task entanglement detection when there is inaccuracy in local measurements~\cite{morelli_2022_EntanglementDetection}. A complementary line is to examine the role of contextuality~\cite{budroni_2022_kochenspecker} in LPO-based witnesses, clarifying when LPO processing preserves or washes out contextual signatures.

\section*{Acknowledgments}
This work was supported by the Institute for Basic Science in Korea (IBS-R024-D1). I am grateful to Dominik \v{S}afr\'{a}nek, Gian Paolo Beretta, and Sumit Mukherjee for helpful discussions. 

\appendix

\section{\label{supp:A}Proof of theorem 2.}
We recall because $A_x, B_y \to \{\pm 1\}$, which implies from Eq.~\eqref{eq:bounds_det} of the main text, 
\begin{equation}
    \abs{\Tr(\rho_A A_x)} = \abs{a_x}\leq 1, ~~ \abs{\Tr(\rho_B B_y)} = \abs{b_y}\leq 1, ~~\forall x,y.
\end{equation}
And we further recall the definition of $F(\bm{a},\bm{b})$ [Eq.~\eqref{eq:bilinear_form}] which is bilinear and affine in $\bm{a}$ and $\bm{b}$
\begin{equation}
    F(\bm{a},\bm{b}) := \bm{a}^T \bm{\alpha} \bm{b} + \bm{\beta}^T \bm{a} + \bm{\gamma}^T \bm{b}.
\end{equation}
We note that $F(\cdot,\cdot)$ is continuous, and the quantum feasible sets  $\mathcal{K}_A \times \mathcal{K}_B$ is compact. Therefore, the supremum, 
$\sup_{\substack{\bm{a}\in \mathcal{K_A},\\ \bm{b}\in \mathcal{K}_B}}F(\bm{a},\bm{b})$
is attained. To show optimization of $F$ over $\mathcal{K}_A\times\mathcal{K}_B$ is upper-bounded by its optimization over the hypercube $\mathbb{H}:= [-1,1]^m\times[-1,1]^n$, we note that over $\mathbb{H}$, the maximum is achieved at a vertex where all coordinates equal to $\pm1$, thus, yielding LHV bound~\cite{Boyd_Vandenberghe_2004}. Hereby, we get
\begin{equation}
    \sup_{\substack{\bm{a}\in \mathcal{K_A},\\ \bm{b}\in \mathcal{K}_B}}F(\bm{a},\bm{b}) \leq \sup_{\substack{\bm{a}\in [-1,1]^m,\\ \bm{b}\in [-1,1]^n}}F(\bm{a},\bm{b}).
\end{equation}
To check whether coordinate-wise linearity and endpoint is attained, we proceed as follows. For a fixed $\bm{b}\in [-1,1]^n$, we can write
\begin{align}\label{eqS:affine_proof}
    \begin{split}
        F(\bm{a},\bm{b}) &= \sum_{x=1}^m \left(\sum_{y=1}^n \alpha_{xy}b_y + \beta_x \right)a_x + \bm{\gamma}^T\bm{b},\\
        & = \sum_{x=1}^m c_x a_x + \text{fixed}.
    \end{split}
\end{align}
This effectively shows $F(\bm{a},\bm{b})$ is affine and linear in $a_x$, and similarly for $b_y$. Because $F(\bm{a},\bm{b})$ is linear function on a closed interval $[-1,1]$, it's maximum must lie at the endpoints~\cite{brunner_2014_bella, pitowsky_1991_Correlationpolytopes, fine_1982_HiddenVariables}. For $\bm{a}$ maximization, we pick a maximizer $a^\ast_x(\bm{b})$ so that
\begin{equation}
    a^\ast_x(\bm{b}) \in \underset{a_x \in [-1,1]}{\text{argmax }} c_x(\bm{b})a_x = \begin{cases}
        +1, ~ c_x(\bm{b}) > 0,\\
        -1, ~ c_x(\bm{b}) < 0,\\
        [-1,1], ~ c_x(\bm{b}) =0.
    \end{cases}
\end{equation}
This implies, we pick $a^\ast_x(\bm{b})$ such that the argument in the summand of Eq.~\eqref{eqS:affine_proof} is always positive, and that the maximum of $a_x(\bm{b})$ will be determined by the sign of $c_x(\bm{b})$. Hence,
\begin{equation}
    \sup_{\bm{a} \in [-1,1]} F(\bm{a},\bm{b}) = \sum_{x=1}^m \abs{c_x(\bm{b})} + \bm{\gamma}^T\bm{b}.
\end{equation}
Now, we will maximize for $\bm{b}$. For each coordinate $b_y$, keeping other $b_{y'}$ ($y \neq y'$) frozen, we define $d_{xy} := \beta_x + \sum_{y \neq y'}\alpha_{xy'} b_{y'}$. Therefore, 
\begin{equation}
    c_x(\bm{b}) = d_{xy} + \alpha_{xy}b_y,
\end{equation}
which can be used to write 
\begin{equation}
    g(\bm{b}) = \left[\sum_{x=1}^m \abs{d_{xy} + \alpha_{xy}b_y}+\gamma_yb_y\right]+ \sum_{y\neq y'}\gamma_{y'}b_{y'}.
\end{equation}
Here too, maximum is achieved at the end-point $b_y = \pm 1$. We note that for each $y$, with $b_{y'\neq y}$ held constant,  $\phi_y(b_y) = \sum_{x=1}^m \abs{d_{xy} + \alpha_{xy}b_y}+\gamma_yb_y$ is convex and piecewise linear on $[-1,1]$, hence maximum is at $b_y = \pm1$. We know, a sum of convex functions with a linear term is also convex, and that a maximum of a convex function over a compact interval is achieved at the end points. We repeat the same argument for all $y$, and get a global maximizer $(\bm{a}^\ast, \bm{b}^\ast)$ of $F$ on $[-1,1]^m\times [-1,1]^n$ with all coordinates $a_x^\ast, b_y^\ast \in \{\pm1\}$, \emph{i.e.,} vertex of $\mathbb{H}$.

A vertex $(\bm{a},\bm{b})\in \{\pm1\}^{m+n}$ assigns a definite outcome $\pm1$ to each local settings on each side. Such assignments are precisely the deterministic LHV strategies~\cite{brunner_2014_bella}. 
\begin{equation}
    \max_{(\bm{a},\bm{b})\in \{\pm1\}^{m+n}} F(\bm{a},\bm{b}) = L_\text{LHV}(\mathcal{B}).
\end{equation}
Combining these we get, 
    \begin{equation}
    \wit[a]{LPO}=
\sup_{\substack{\bm{ a}\in\mathcal{ K}_A,\\ \bm{ b}\in\mathcal{ K}_B}} F(\bm a,\bm b)
\le
\max_{(\bm{ a},\bm{ b})\in{\pm1}^{m+n}} F(\bm{ a},\bm{ b})= L_{\text{LHV}}(\mathcal B).
    \end{equation}
We note, equality holds if and only if the maximizing deterministic assignment $(\bm{a},\bm{b})\in \{\pm1\}^{m+n}$ are jointly realizable as vectors of means by some local states $(\rho_A, \rho_B)$ for the given operator families $\{A_x\}, \{B_y\}$. A sufficient (but not necessary) condition is that each family $\{A_x\}$ ($\{B_y\}$) are commuting projectors, and there exists a local pure state $\ket{\phi_A}$ ($\ket{\phi_B}$) that is a common eigenvector with eigenvalues prescribed by $\bm{a}$ ($\bm{b}$). Then $\rho_A = \dyad{\phi_A}$, $\rho_B = \dyad{\phi_B}$ realize $\wit[a]{LPO} = L_\text{LHV}(\mathcal{B})$.

In general, if the local operator families are incompatible, (\emph{e.g.,} non-commuting Pauli operators on a qubit) then no single local state can realize $\abs{b_y} = 1$ for all $y$ simultaneously. In that case the LHV vertex is outside $\mathcal{K}_A\times \mathcal{K}_B$, and the inequality is strict $\wit[a]{LPO} < L_\text{LHV}(\mathcal{B})$.

\section{\label{supp:B}Proof of Theorem 3.}
Let's sort the lower bound first. As mentioned in the main text, invoking free will in the choice of $A_x$ and $B_y$, we can choose all of $A_x$ and $B_y$ such that $a_x = 0$ and $b_y = 0$ (for geometry-free optimization, with no constraints, we can do this). And thereafter, we see because of supremum in the definition, $\wit[s]{LPO} \geq 0$. This part just states the obvious---if there exists one admissible choice with value 0, then the maximum over all choices cannot be below 0 without saying anything about which settings maximize the witness. 

The upper bound is achieved as follows. For arbitrary settings, we can write (using triangle inequality)
\begin{align}\label{eqS:limits_triangle}
    \begin{split}
        \wit[s]{LPO} &\leq \sup_{A_x, B_y}\left[ \sum_{x,y}\abs{\alpha_{xy}}\abs{a_x}\abs{b_y}\abs{C_{xy}} + \sum_x\abs{\beta_x}a_x^2\right.\\
        &\left.+ \sum_y\abs{\gamma_y}b_y^2 \right],\\
        &\leq \sup_{A_x, B_y}\left[ \sum_{x,y}\abs{\alpha_{xy}}\abs{a_x}\abs{b_y} + \sum_x\abs{\beta_x}a_x^2\right.\\
        &\left.+ \sum_y\abs{\gamma_y}b_y^2 \right], ~\text{using } \abs{C_{xy}} = \abs{\Tr(\rho X_{x,y})}\leq 1.
    \end{split}
\end{align}
We now note that, from the properties of the Bloch vectors,
\begin{equation}
    \abs{a_x} \leq \norm{\bm{r}_A},~~ \abs{b_y} \leq \norm{\bm{r}_B},~~ a^2_x \leq \norm{\bm{r}_A}^2 ~~ b_y^2 \leq \norm{\bm{r}_B}^2.
\end{equation}
Substituting this directly in Eq.~\eqref{eqS:limits_triangle}, we get 
\begin{align}
\begin{split}
    \wit[s]{LPO}&\leq \norm{\bm{r}_A}\norm{\bm{r}_B}\sum_{x,y}\abs{\alpha_{xy}} + \norm{\bm{r}_A}^2\sum_x\abs{\beta_x}\\
    &+ \norm{\bm{r}_B}^2\sum_y\abs{\gamma_y}.
\end{split}
\end{align}
Thereby, proved. 

\section{\label{supp:C}$C_{3322}$ related bound}
From Eq.~\eqref{eq:bilinear_form} of main text, and the definition of $C_{3322}$ as given in Eq.~\eqref{eq:3322}, we can write the following matrices~\cite{collins_2004_relevant}
\begin{equation}
    \bm{\alpha} = \mqty(1 & 1 & 1\\
    1 & 1 & -1\\
    1 & -1 & 0), ~~ \bm{\beta} = \mqty(1 \\ 1 \\ 0), ~~ \bm{\gamma} = \mqty(-1 \\ -1 \\ 0).
\end{equation}
\begin{proposition}
    For any two-qubit state, we have for $C_{3322}$ case
    \begin{equation*}
        0 \leq \wit[s]{LPO,3322} \leq 8\norm{\bm{r}_A}\norm{\bm{r}_B} + 2\norm{\bm{r}_A}^2
    \end{equation*}
\end{proposition}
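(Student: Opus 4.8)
The plan is to read off the coefficient data $(\bm{\alpha},\bm{\beta},\bm{\gamma})$ for the $C_{3322}$ functional from Eq.~\eqref{eq:3322} and feed it into the estimates of Theorem~\ref{th:geometry_free_bound}, with one refinement that sharpens the generic bound. Counting entries: $\bm{\alpha}$ has exactly eight nonzero entries, all $\pm 1$, so $\sum_{x,y}\abs{\alpha_{xy}}=8$; $\bm{\beta}=(1,1,0)^T$ has $\sum_x\abs{\beta_x}=2$; and $\bm{\gamma}=(-1,-1,0)^T$ has all entries non-positive.

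First I would handle the bilinear term exactly as in the proof of Theorem~\ref{th:geometry_free_bound}: the triangle inequality together with $\abs{C_{xy}}\leq 1$, $\abs{a_x}\leq\norm{\bm{r}_A}$, $\abs{b_y}\leq\norm{\bm{r}_B}$ gives $\bigl|\sum_{x,y}\alpha_{xy}a_x b_y C_{xy}\bigr|\leq 8\,\norm{\bm{r}_A}\norm{\bm{r}_B}$, uniformly in the settings. The $\beta$-term contributes $\sum_x\beta_x a_x^2 = a_1^2+a_2^2\leq 2\norm{\bm{r}_A}^2$, since in the geometry-free (unconstrained) optimization each $a_x^2\leq\norm{\bm{r}_A}^2$ independently. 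The only genuinely new observation is in the $\gamma$-term: because $\gamma_1=\gamma_2=-1$ and $\gamma_3=0$, we have $\sum_y\gamma_y b_y^2 = -(b_1^2+b_2^2)\leq 0$, so it can simply be discarded from the upper bound rather than bounded by $2\norm{\bm{r}_B}^2$; this is exactly what removes the $\norm{\bm{r}_B}^2$ term that a blind application of Theorem~\ref{th:geometry_free_bound} would leave behind. Adding the three pieces and taking the supremum over $\{A_x\},\{B_y\}$ yields $\wit[s]{LPO,3322}\leq 8\,\norm{\bm{r}_A}\norm{\bm{r}_B}+2\,\norm{\bm{r}_A}^2$. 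For the lower bound I would invoke free will: choosing every measurement direction of Alice (and of Bob) to be one fixed unit vector orthogonal to $\bm{r}_A$ (resp.\ $\bm{r}_B$) --- legitimate since the geometry-free optimization imposes no distinctness or mutual-orthogonality constraint --- forces $a_x=b_y=0$ for all $x,y$, so the functional in Eq.~\eqref{eq:sym_lpo_wit} vanishes at this choice and hence the supremum is $\geq 0$.

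I do not anticipate a real obstacle; the one point that needs care is the sign bookkeeping in the $\gamma$-term --- and, relatedly, resisting the temptation to symmetrize the bound --- since that is the single place where the claimed estimate improves on the generic one. One should also note in passing that interchanging the supremum over settings with the triangle-inequality estimates is harmless, because every bound used is uniform in $\{A_x\}$ and $\{B_y\}$.
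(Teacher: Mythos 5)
Your proposal is correct and follows essentially the same route as the paper's proof: the triangle-inequality estimate from Theorem~\ref{th:geometry_free_bound} with $\sum_{x,y}\abs{\alpha_{xy}}=8$ and $\sum_x\abs{\beta_x}=2$, dropping the $\gamma$-term because its coefficients are non-positive (the paper phrases this as $\sum_y(\gamma_y)_+=0$, ``negatives never help a supremum''), and the lower bound via measurement directions orthogonal to $\bm{r}_A$ and $\bm{r}_B$ so that all $a_x=b_y=0$. No gaps; the sign bookkeeping you flag is exactly the refinement the paper uses.
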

\begin{proof}
    We use the notation $(a)_+ := \max\{a,0\}$. Now we consider triangle inequality as before and start from the result of Eq.~\eqref{eqS:limits_triangle}. We find
    \begin{equation}
        \sum_{x,y}\abs{\alpha_{xy}} = 8, ~\sum_x (\beta_x)_+ = 2, ~\sum_y (\gamma)_+ = 0
    \end{equation}
    because negatives in $\beta_x$ and $\gamma_y$ never help a supremum.
    And therefore, we have the upper-bound. For lower bound, as before, we select $\bm{a}_x \perp \bm{r}_A$ and $\bm{b}_y\perp \bm{r}_B$ (see proof of Theorem~\ref{th:supremum_ortho} in the main text).
\end{proof}
    Next, we proceed to make a tighter bound. 
    \begin{proposition}
        For three mutually orthogonal Bloch directions, we will have
        \begin{equation*}
            \wit[s]{LPO,3322, $\perp$}\leq (1+\sqrt{3})\norm{\bm{r}_A}\norm{\bm{r}_B}+ \norm{\bm{r}_A}^2.
        \end{equation*}
    \end{proposition}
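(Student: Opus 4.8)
The plan is to combine the structure of the proof of Theorem~\ref{th:supremum_ortho} with the specific coefficient matrices for the $C_{3322}$ functional. Starting from the triangle-inequality bound~\eqref{eqS:limits_triangle}, I split $\wit[s]{LPO,3322,\perp}$ into the correlation term $\bm{a}^T(\bm{\alpha}\circ\bm{C})\bm{b}$ and the marginal terms $\bm{\beta}^T\bm{a}$, $\bm{\gamma}^T\bm{b}$. For the marginal terms, since $\bm{\gamma}$ has no positive entries the term $\sum_y\gamma_y b_y^2$ contributes at most $0$, while $\max(\beta_x)_+ = 1$ gives $\sum_x\beta_x a_x^2 \leq \sum_x a_x^2 \leq \norm{\bm{r}_A}^2$ under the orthogonality (Pythagoras) bound already established in Theorem~\ref{th:supremum_ortho}. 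This immediately accounts for the $\norm{\bm{r}_A}^2$ summand.

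The heart of the argument is the correlation term. Following~\eqref{eq:corr_theorem}, I bound $\bm{a}^T(\bm{\alpha}\circ\bm{C})\bm{b} \leq \norm{\abs{\bm{a}}}_2\,\norm{\abs{\bm{\alpha}}}_2\,\norm{\abs{\bm{b}}}_2 \leq \norm{\bm{r}_A}\norm{\bm{r}_B}\,\norm{\abs{\bm{\alpha}}}_2$, so everything reduces to computing the spectral norm (largest singular value) of the entry-wise absolute value matrix
\begin{equation*}
\abs{\bm{\alpha}} = \mqty(1 & 1 & 1\\ 1 & 1 & 1\\ 1 & 1 & 0).
\end{equation*}
I would compute $\norm{\abs{\bm{\alpha}}}_2 = \sqrt{\lambda_{\max}(\abs{\bm{\alpha}}^T\abs{\bm{\alpha}})}$. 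Here $\abs{\bm{\alpha}}$ is symmetric, so its singular values are the absolute values of its eigenvalues; the characteristic polynomial of the $3\times3$ matrix factors, and one finds the largest eigenvalue equals $1+\sqrt{3}$ (the other two being $1-\sqrt{3}$ and $0$). Hence $\norm{\abs{\bm{\alpha}}}_2 = 1+\sqrt{3}$, which supplies the $(1+\sqrt{3})\norm{\bm{r}_A}\norm{\bm{r}_B}$ term. Adding the two contributions yields the claimed bound.

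The main obstacle, though entirely mechanical, is verifying the eigenvalue computation for $\abs{\bm{\alpha}}$: one must be careful that the entry-wise absolute value (not the original signed $\bm{\alpha}$) is what enters, and that its spectral norm really is $1+\sqrt{3}$ rather than something derived from the singular values of the signed matrix. A secondary subtlety is confirming that the orthogonality assumption legitimately gives $\norm{\abs{\bm{a}}}_2 \leq \norm{\bm{a}}_2 \leq \norm{\bm{r}_A}$ for the $m=3$ case (all three Bloch directions mutually orthogonal spanning $\mathbb{R}^3$), which is exactly the Pythagoras step reused from Theorem~\ref{th:supremum_ortho}. Once these are in place the rest is bookkeeping: for a detailed derivation see the computation of $\norm{\abs{\bm{\alpha}}}_2$ above together with the triangle-inequality steps of Appendix~\ref{supp:B}.
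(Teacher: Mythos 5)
Your proposal is correct and follows essentially the same route as the paper: bound the marginal terms via $\max(\beta_x)_+=1$, $\max(\gamma_y)_+=0$ together with the orthogonality (Pythagoras) step of Theorem~\ref{th:supremum_ortho}, and bound the correlator term by $\norm{\abs{\bm{\alpha}}}_2\norm{\bm{r}_A}\norm{\bm{r}_B}$ with $\norm{\abs{\bm{\alpha}}}_2=1+\sqrt{3}$. Your eigenvalue check of $\abs{\bm{\alpha}}$ (spectrum $\{1+\sqrt{3},\,1-\sqrt{3},\,0\}$) is accurate and in fact supplies the detail the paper leaves as a ``direct singular value calculation.''
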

\begin{proof}    
    The important part of the proof lies in the following. For the correlator part, direct singular value calculation of $\bm{\alpha}$ leads to
    \begin{equation}
        \norm{\abs{\bm{\alpha}}}_2 = 1 + \sqrt{3}
    \end{equation}
    which reduces the correlator contribution to $(1+ \sqrt{3})\norm{\bm{r}_A}\norm{\bm{r}_B}$. And the rest follows trivially.
\end{proof}

\bibliography{refer}

\end{document}